\documentclass[a4paper, 12pt]{article}
\usepackage{a4wide} 
\usepackage{graphicx}
\usepackage{pgfplots}
\usepackage{here, amsmath, latexsym, amssymb, bm, ascmac, mathtools, multicol, tcolorbox, subfig, amsthm, natbib}

\mathtoolsset{showonlyrefs}
\theoremstyle{definition}
\newtheorem{axiom}{Axiom}
\newtheorem{definition}{Definition}
\newtheorem{theorem}{Theorem}
\newtheorem{proposition}{Proposition}

\newtheorem{corollary}{Corollary}

\newtheorem{lemma}{Lemma}

\usepackage[colorlinks=true, linkcolor=blue, citecolor=blue]{hyperref}

\title{An axiomatic model of robust Bayesian persuasion\thanks{We would like to thank Daisuke Hirata, Daiki Kishishita, Takeshi Murooka, Nozomu Muto, Satoshi Nakada, Shinpei Noguchi, Yuta Takahashi, Norio Takeoka, Tomoki Tsujita, Eisuke Uchida, Takuro Yamashita, and the participants at Hitotsubashi Economic Theory Workshop for their helpful comments.} \\ \vspace{5mm} \large{\textbf{Preliminary draft}}}
\author{Wataru Kitano\thanks{Graduate School of Management, Tokyo University of Science, 1-11-2 Fujimi, Chiyoda-ku, Tokyo 102-0071, Japan, wataru.kitano9998@gmail.com} and Shohei Yanagita\thanks{The Department of Economics, Takasaki City University of Economics, 1300 Kaminamie, Takasaki, Gumma 370-0801, Japan, shoheiyanagita@gmail.com}}

\date{\today}

\makeatother

\usepackage{pgfplots}
\pgfplotsset{compat=1.18}

\begin{document}

\maketitle

\begin{abstract}
     We develop an axiomatic model of robust Bayesian persuasion where the sender cannot fully control the information available to the receiver. After selecting an information structure, the sender expects that more informative structures might be implemented. We model this by allowing the sender to assess each information structure under worst-case information leakage, represented by a set of more informative structures. The model encompasses a wide range of examples of information leakage, which we also explore.
\end{abstract}

{\bf Keywords:} Bayesian persuasion; Robustness; Blackwell informativeness

{\bf JEL classification:} D81, D83, D91

\newpage

\section{Introduction}

\subsection{Motivation}

There are many situations where a principal shapes both the menu and the information available to an agent.
In this paper, we show a representation theorem that characterizes the principal and the agent as a sender and a receiver, respectively, of a variant of the Bayesian persuasion model \citep{kamenica2011bayesian}.

We follow the approach established by \cite{jakobsen2021axiomatic}, which axiomatizes the standard Bayesian persuasion model.
Whereas Jakobsen takes preferences over information structures for each fixed menu as primitive, our primitive is preferences over pairs of information structure and menu.
Therefore, our motivation builds on the primitive naturally expanded from \cite{jakobsen2021axiomatic}'s analysis.
The following motivating examples illustrate this idea while incorporating two model interpretations proposed by Jakobsen.\footnote{We discuss further examples in Section \ref{Sec_Ex}.}

\begin{enumerate}
\item The first example builds on 
\textit{the behavioral interpretation} proposed by \cite{jakobsen2021axiomatic}, where a principal treats the future self as an agent. 
Suppose an e-commerce platform user consumes nothing currently but anticipates that the future self faces a decision problem of what products or services to purchase.
The current self and the future self are interpreted as the principal and the agent, respectively.
That is, the current user customizes the platform's recommendation system by inputting budgets and tastes such as dietary restrictions to narrow down the consumption menu faced by the future self.
Alternatively, the current user may commit to a particular consumption menu by choosing a specialized platforms that serves a specific market segment, which is the modern reinterpretation of the restaurant-choice examples in \cite{kreps1979representation} and \cite{gul2001temptation}.
Moreover, the user can shape the information available to the future self by customizing news feeds on social media.\footnote{This is adapted from the example in \cite{jakobsen2021axiomatic}, which takes an individual's choice over information as primitive.}
Therefore, the consumer shapes both the menu and information available to the future self.
\item The second example builds on \textit{the persuasion interpretation}, also proposed by \cite{jakobsen2021axiomatic}, where a principal and an agent are distinct players. 
Consider seller-consumer interactions mediated by a marketplace platform. 
In this situation, the platform and the seller can be interpreted as the principal and the agent, respectively.
That is, there are two channels through which the platform influences the seller's decision: control over product lineup and the consumer data disclosed to the seller.
For instance, Apple mediates the market for apps through (i) app lineup regulation that incorporates both general requirements such as safety and brand-specific requirements such as design standards and (ii) control over consumer data disclosed to developers.\footnote{See \url{https://developer.apple.com/app-store/review/guidelines/} (accessed June 24, 2026) for details on app regulation and \url{https://www.apple.com/legal/privacy/data/en/app-analytics/} (accessed June 24, 2026) for details on data disclosure.}
These intermediation policies influence both admissible app lineup and information disclosed to developers.
\end{enumerate}
Importantly, in both examples, it is also natural to allow the principal's control to be imperfect, particularly with respect to agents' information acquisition.
In the first example, the consumer may be unable to control the additional information that the future self receives from recommendation algorithms on social media.
In the second example, the platform may be unable to control a developer's independent research into consumer demand, and such research generates deeper insights when combined with the data provided by the platform.

Moreover, in such cases, the principal may be unable to anticipate the content of the additional information acquired by the agent.
Therefore, in some contexts, it is natural to model the principal as an information sender who faces ambiguity about the content of the additional information available to the agent.
While the Bayesian persuasion problem with such sender-side ambiguity is solved by \cite{dworczak2022preparing}, we provide an axiomatic foundation for that model.

\subsection{Model overview}

We consider a sender who chooses both a menu of acts and an information structure.
The receiver observes a signal and chooses an act from the menu. Our key departure from the standard Bayesian persuasion framework is that the sender
cannot fully control the information available to the receiver. After choosing an
information structure $\sigma$, the sender anticipates that a more informative
information structure may be implemented instead. This captures situations in which the receiver has access to additional information sources, such as
platform-generated recommendations, fact-checking, or future information acquisition.

To model this, we introduce a function $\mathcal{L}$ that assigns to each information structure $\sigma$ a set $\mathcal{L}(\sigma)$ of more informative information structures. When the sender chooses $\sigma$, she evaluates outcomes by taking the worst case over $\mathcal{L}(\sigma)$. Formally, the sender's preference is represented by a robust Bayesian persuasion representation of the form
\begin{equation}
    V(A,\sigma)=\inf_{\pi\in \mathcal{L}(\sigma)} \sum_{s \in \pi} \left[ \min_{f^s \in c^s (A)} \sum_{\omega \in \Omega} v (f_\omega) s_\omega \nu_\omega \right],
\end{equation}
where $v$ is a von Neumann–Morgenstern utility function and $\nu$ is a prior belief. Here, $c^s(A)$ denotes the set of receiver's optimal choices from the menu $A$ given signal $s$.

This representation reflects two sources of robustness. First, the sender evaluates each information structure under worst-case information leakage captured by $\mathcal{L}(\sigma)$. Second, when the receiver has multiple optimal actions, tie-breaking is assumed to be resolved in a way that is unfavorable to the sender. The receiver is assumed to be a standard Bayesian decision maker: she updates her beliefs according to Bayes' rule and chooses an act that maximizes expected utility.

Our main result provides an axiomatic characterization of this robust Bayesian persuasion representation. The axioms capture (i) standard rationality and expected utility under no information, (ii) pessimism in tie-breaking, and (iii) a novel condition---menu-independent leakage---which requires that the sender's concerns about information leakage do not depend on the chosen menu.

\subsection{Related literature}
A growing literature develops axiomatic foundations for \cite{kamenica2011bayesian}'s Bayesian persuasion (henceforth BP) model \citep{jakobsen2021axiomatic,higashi2025axiomatic,jakobsen2025temptation,mensch2025revealed}.
The key novelty of this paper lies in the introduction of the possibility that a receiver may acquire additional information from outside information sources.

Our work also differs from the literature in its primitive.
The literature employs several primitives: preferences over information structures for each fixed menu \citep{jakobsen2021axiomatic}, preferences over menus \citep{higashi2025axiomatic,jakobsen2025temptation}, and stochastic choice data \citep{jakobsen2025temptation,mensch2025revealed}.
By contrast, our primitive is a preference over pairs of the menu and the information structure.
Unlike \cite{jakobsen2021axiomatic}, our primitive allows us to compare menus.
This is useful for capturing pessimism about additional information: a commitment to reducing a menu is preferable when it precludes sway induced by the worst-case additional information.
Moreover, we show that replacing some axioms characterizes the standard BP model, which complements the literature. 

Several papers have characterized information structures through preferences over menus. \cite{dillenberger2014theory} developed a framework that identifies the information structure subjectively perceived by a decision maker. \cite{de2017rationally} proposed a model in which the decision maker chooses an information structure at a cost and provided its axiomatic characterization. \cite{higashi2025axiomatic} modeled Bayesian persuasion and several of its generalizations and characterized each of them axiomatically. In all of these models, information structures emerge as endogenous parameters. A key difference between their models and ours is that we treat information structures themselves as primitive objects. This distinction allows us to analyze not only decisions induced by information, but also attitudes toward information itself, such as information avoidance.

Our introduction of outside information sources is closely related to the information design literature on privately informed receivers \citep{laclau2016public,kolotilin2017persuasion,kolotilin2018optimal,guo2019interval,bizzotto2020testing,hu2021robust,dworczak2022preparing,kosterina2022persuasion,matyskova2023bayesian}.
Among these papers, \cite{dworczak2022preparing} is most relevant to our analysis:
they solve a robust BP problem where the sender faces ambiguity about the outside information sources available to the receiver.

The applications discussed above are also based on the literature on other fields.
In social-media application, the future self may be exposed to additional information due to the temptation to acquire biased news \citep[see][for surveys]{gentzkow2015media,puglisi2015empirical} or platform's recommendations that may not optimize user utility \citep[e.g.,][]{acemoglu2024model}.

Privacy choice applications are based on longstanding and growing literature on the \textit{economics of privacy}.
Recent studies focus on the effect of privacy regulations such as (i) restricting the amount of data the platform can collect \citep[see, e.g.,][]{eilat2021bayesian} and (ii) allowing the consumer to choose what kind of data is disclosed to the platform \citep[see, e.g.,][]{ichihashi2020online}.
The literature on consumers' privacy-choice has discussed a puzzle known as \textit{privacy paradox}: individuals with privacy concerns tend to disclose more data than expected \citep[see][for surveys]{acquisti2016economics,goldfarb2023economics,dube2025frontiers}.
Our analysis provides a novel interpretation for such behavior.

Finally, our model can be applied to lying by a sender, which is a simple but important form of strategic disclosure.
In the information design literature, strategically biased information structures usually generate false messages with a certain probability.
\cite{ederer2022bayesian} considers the BP problem where such a false message is detected with a certain probability.
While \cite{ederer2022bayesian} assumes the sender's full commitment to information structures, \cite{luo2018strategies}, \cite{habu2024agreeing}, and \cite{luo2025lying} consider the setting where the sender has a limited commitment power and lying or deviation from the committed information structure are detected with a certain probability.
Motivated by these works, it is natural to consider a sender who is cautious about lie detections with unknown probability.

\section{Model}

In this section, we provide our setting and define the model called \textit{robust Bayesian persuasion representation}.

\subsection{Setting}

Our domain enriches the domain considered in \cite{jakobsen2021axiomatic}. There is a finite state space $\Omega$. Denote by $\Delta(\Omega)$ the set of probability distributions over $\Omega$. For any $\mu\in\Delta(\Omega)$ and $\omega\in\Omega$, we denote by $\mu_\omega\in[0,1]$ the probability that $\mu$ assigns to $\omega$. Let $X$ be a finite set of outcomes, and $\Delta(X)$ be the set of lotteries over $X$. An \textit{act} is a mapping $f:\Omega\rightarrow\Delta(X)$. That is, each act maps states to lotteries. Denote by $\mathcal{F}$ be the set of acts. For any $f\in\mathcal{F}$ and $\omega\in\Omega$, we denote by $f_\omega\in\Delta(X)$ the lotteries that $f$ assigns to $\omega$. A \textit{menu} is a nonempty and finite set of acts. The set of menus is denoted by $\mathcal{A}$. The mixture operation between two acts $f$ and $g$ is defined as usual: for any $\alpha\in (0,1)$, $\alpha f+(1-\alpha)g:=(\alpha f_\omega +(1-\alpha)g_\omega)_{\omega\in\Omega}$. Also, for any $\alpha\in(0,1)$ and $A,B\in\mathcal{A}$, we define the mixed menu $\alpha A+(1-\alpha)B$ as $\alpha A+(1-\alpha)B:=\{\alpha f+(1-\alpha)g\mid f\in A,g\in B\}$. With abusing notation, for any $f\in\mathcal{F}$, a singleton menu $\{f\}\in\mathcal{A}$ is identified as the act $f$. Similarly, for any $p\in\Delta(X)$, a constant act $g_\omega=p$ for all $\omega\in\Omega$ is identified as the lottery $p$.

An \textit{information structure} is a matrix $\sigma$ with entries in $[0,1]$, exactly $|\Omega|$ rows, finitely many columns, no column consisting only if zeros, and the sum of entries in each row is $1$. Denote by $\mathcal{E}$ the set of information structures. A signal $s$ is a $|\Omega|$ dimensional vector whose entries are in $[0,1]$ and at least one of them are non-zero. That is, $s=(s_\omega)_{\omega\in\Omega}\in[0,1]^\Omega$ and there exists $\omega\in\Omega$ such that $s_\omega \neq 0$.Denote by $S$ the set of signals. Every signal can be interpreted as a column of some information structure. The notation $s\in\sigma$ means that a signal $s$ is one of the columns of an information structure $\sigma$. For any $\sigma\in\mathcal{E}$ and $\alpha\in(0,1)$, let $\alpha\sigma$ be the matrix constructed by multiplying each entry of $\sigma$ by $\alpha$. For any $\sigma,\sigma'\in\mathcal{E}$ and $\alpha\in(0,1)$, let $\alpha\sigma\cup(1-\alpha)\sigma'$ be the matrix constructed by appending an information structure $(1-\alpha)\sigma'$ to the right of $\alpha\sigma$.

For any $\sigma,\sigma'\in\mathcal{E}$, we say that $\sigma$ is \textit{Blackwell more informative} than $\sigma'$ if there exists a Markov matrix $B$ such that $\sigma'=\sigma B$. That is, $\sigma'$ is a garbling of $\sigma$.\footnote{For the details about the Blackwell informativeness, see, for example, \cite{blackwell1953equivalent} and \cite{green2022two}.} Denote by $\sigma\trianglerighteq\sigma'$ the relation that $\sigma$ is Blackwell more informative than $\sigma'$. In particular, let $o$ denote a particular null information structure. That is, it is an $|\Omega| \times 1$ matrix. This information structure reveals no information about the realized state. When the sender chooses not to provide any information to the receiver, this can be interpreted as selecting this information structure. Clearly, for any information structure $\sigma$, $\sigma\trianglerighteq o$ holds.

We consider a preference relation $\succsim$ over the set of pairs of menu and information structure. The asymmetric and symmetric parts of $\succsim$ are denoted by $\succ$ and $\sim$, respectively.

\subsection{Robust Bayesian persuasion representations}

Under the above setting, we define the robust Bayesian persuasion representation. Before that, we introduce a useful notation that will be used repeatedly in what follows.
Given a belief $\nu\in\Delta(\Omega)$, an information structure $\sigma$, and a signal $s\in \sigma$, denote by $\nu^s$ be the Bayesian posterior of $\nu$ conditional on $s\in\sigma$. That is, for each $\omega\in\Omega$, 
\begin{equation}
    \nu^s_\omega=\frac{\nu_\omega s_\omega}{\sum_{\omega'\in\Omega} \nu_{\omega'}s_{\omega'}}.
\end{equation}

The robust Bayesian persuasion representation is defined as follows:
\begin{definition}
    A \textbf{robust Bayesian persuasion representation for $(\succsim,c)$} is a function 
    \begin{equation}
        V(A,\sigma)=\inf_{\pi\in \mathcal{L}(\sigma)} \sum_{s\in\pi,\omega\in\Omega} \underline{V} (\nu ^s) s_\omega \nu_\omega,
    \end{equation}
    where
    \begin{equation}
        \underline{V} (\nu ^s)=\min_{f^s \in c^s (A)} \sum_{\omega \in \Omega} v (f_\omega) \nu^s_\omega
    \end{equation}
    is the sender's interim expected utility after the realization of the signal $s$, $v:\Delta(X)\rightarrow\mathbb{R}$ is a non-constant and mixture-linear function, $\nu\in\Delta(\Omega)$ has full support, and $\mathcal{L}:\mathcal{E}\rightarrow 2^{\mathcal{E}}$ satisfies the following two conditions: (i) $\mathcal{L}(\sigma)\subseteq \{\sigma'\in\mathcal{E}~|~\sigma'\trianglerighteq\sigma\}$ for all $\sigma\in\mathcal{E}$ and (ii) $\mathcal{L}(o)=\{o\}$.
\end{definition}

The functional form of the robust Bayesian persuasion model is motivated by \cite{dworczak2022preparing}. The robust Bayesian persuasion representation is characterized by three parameters. First, $v$ is a von Neumann–Morgenstern utility function representing the sender's preferences over lotteries. Second, $\nu$ is the sender's prior belief over the state space. The novel component is a function $\mathcal{L}$ that assigns to each information structure a set of information structures. For each $\sigma$, $\mathcal{L}(\sigma)$ specifies a set of information structures that are more informative than $\sigma$.

As the representation makes clear, when the sender chooses an information structure $\sigma$, she takes into account not only $\sigma$ itself but also all information structures in $\mathcal{L}(\sigma)$ as potential implementations. The sender evaluates the pair $(A, \sigma)$ based on the worst-case realization within this set. In other words, she anticipates that, for some reason, an information structure more informative than $\sigma$ may be implemented, and that the receiver will make decisions based on the signals generated by such a structure. Moreover, when the receiver has multiple optimal choices, the sender assumes that tie-breaking is resolved in a way unfavorable to her. Thus, the sender evaluates the pair $(A, \sigma)$ robustly with respect to both information leakage and adverse tie-breaking.

Finally, we impose the condition $\mathcal{L}(o) = \{o\}$. 
This assumption is natural when additional information takes the form of ex-post verification: After receiving a signal, the receiver may verify its accuracy, seek a second opinion, or consult another expert. Such verification may replace the sender's information structure with a more informative one. By contrast, when the sender chooses $o$, the receiver receives no signal to verify. Hence, no additional information can arise through this channel.

Notice that the robust Bayesian persuasion representation can be rewritten as
\begin{equation}
    V(A,\sigma)=\inf_{\pi\in \mathcal{L}(\sigma)} \sum_{s \in \pi} \left[ \min_{f^s \in c^s (A)} \sum_{\omega \in \Omega} v (f_\omega) s_\omega \nu_\omega \right].
\end{equation}

The original robust Bayesian persuasion representation is closer in form to the model introduced by \cite{dworczak2022preparing}. By contrast, the reformulated version above is simpler and preserves the intuition of the original functional form. We therefore treat these two representations interchangeably throughout the paper.

Next, we define the receiver's choice correspondence. We assume that the receiver is a standard Bayesian information processor. That is, upon receiving a signal, she updates her prior belief via Bayes' rule and then chooses from the menu an act that maximizes her expected utility with respect to her posterior belief. Such a standard receiver can be formalized as follows.

\begin{definition}
    [\cite{jakobsen2021axiomatic}] A \textbf{Bayesian representation for $c$} is a pair of a non-constant mixture-linear function $u:\Delta(X)\rightarrow\mathbb{R}$ and a full support probability distribution $\mu\in\Delta(\Omega)$ such that for all $A\in\mathcal{A}$ and $s\in S$,
    \begin{equation}
        c^s (A) = \left\{ f \in A ~:~ \forall g\in A,~ \sum_{\omega\in\Omega} u(f_\omega) \mu^s_\omega \geq \sum_{\omega\in\Omega} u(g_\omega) \mu^s_\omega \right\}.
    \end{equation}
\end{definition}

The above definition is introduced by \cite{jakobsen2021axiomatic}. According to this definition, the receiver is endowed with a von Neumann–Morgenstern utility function $u$ and a prior belief $\mu$, which she uses to process signals and chooses an optimal act. An axiomatic characterization of the Bayesian representation is provided by \cite{jakobsen2021axiomatic}, to which we will return in a later section. Hereafter, we assume that a receiver's choice correspondence $c$ is represented by a Bayesian representation. 

\textbf{Remark 1.} The authors are aware that the assumption $\mathcal{L}(o)=\{o\}$ may not fit the examples in the introduction, but the following interpretations are possible.
First, recall the example of the behavioral interpretation, where the consumer commits to an information acquisition plan by customizing news feeds on social media. 
In this case, if we read choosing $o$ as terminating the use of social media, the consumer choosing $o$ is free from further recommendation from social media, which is consistent with $\mathcal{L}(o)=\{o\}$.

Next, recall the example of the persuasion interpretation, where the platform provides consumer data to the seller. 
In this case, the assumption $\mathcal{L}(o)=\{o\}$ is valid when the seller's independent research generates a useful insight only when combined with the data provided by the platform.\footnote{The literature has reported that a seller's product assortment based on independent research becomes clearly more efficient when the platform provides the market data \citep{chen2026role}, which may enhance the validity of the complementarity of the seller's independent research and the platform's data provision discussed here.}

\textbf{Remark 2.} Some reader may be interested in whether the robust Bayesian persuasion representation has the optimal information structure. 
Let $\sigma^{full}$ be a fully informative information structure, which is formally the $|\Omega|$-dimensional identity matrix $E_{|\Omega|}$.
Applying Lemma 1 of \cite{dworczak2022preparing}, we can immediately see that, if $\sigma^{full}\in\mathcal{L}(\sigma)$ for every $\sigma\in\mathcal{E}$, then $\sigma^{full}$ maximizes $V(A,\sigma)$ for every menu $A$.
Intuitively, on the one hand, if an information structure $\sigma$ leads to a better outcome than $\sigma^{full}$ for the sender, the pessimistic sender anticipates that the information structure weakly worse than $\sigma^{full}$ will be chosen from $\mathcal{L}(\sigma)$.
On the other hand, the sender can terminate the outcomes worse than the one resulting from $\sigma^{full}$ just by choosing $\sigma^{full}$ since it must be that $\mathcal{L}(\sigma^{full})=\{\sigma^{full}\}$.

\section{Examples}\label{Sec_Ex}

\subsection{Consumer data disclosure in digital platforms}

In this subsection, we provide further discussion on Example 2 in the introduction:
an app market platform and a developer are the principal and the agent.
We fix a menu available to the developer and delve into the platform's informational choice.

The platform chooses the degree of disclosure of consumer data to the developer.
However, the platform cannot control the additional information available to the developer, which generates deeper insights when combined with the provided data.
It is impossible for the platform to anticipate the content of the additional information, but the platform can know or guess the upper limit of the consumers' \textit{privacy loss}:
it can be interpreted as the legal regulation of the degree of consumer data use for the developer's decision or the developer's capacity limit.

How can privacy loss be formalized? We adopt the concept of ex-post privacy proposed by \cite{eilat2021bayesian}. 
For any probability distributions $p,q\in\Delta(\Omega)$, the KL divergence from $p$ to $q$ is defined as 
\begin{equation}
    D_{KL}(p||q)=\sum_{\omega\in\Omega} p_\omega \log \left(\frac{p_\omega}{q_\omega}\right).
\end{equation}
This metric measures the distance between two probability distributions. The ex-post privacy notion is defined by the signal realization that maximizes the KL divergence, meaning:
\begin{equation}
    \max_{s\in S}D_{KL}(\nu^s||\nu),
\end{equation}
where $\nu\in\Delta(\Omega)$ is a prior.
In other words, privacy loss is determined by the realization under which privacy is most severely compromised.

Now, suppose the information available to the developer is limited by this ex-post privacy notion. Specifically, there exists a constant $\kappa$ such that the developer can only use information structures that satisfy
\begin{equation}
    \max_{s\in S}D_{KL}(\nu^s||\nu) \leq \kappa.
\end{equation}
Assume the platform initially discloses information via an information structure $\sigma$. 
In the example above, the developer could extract additional information about the consumer by combining the disclosed data with other sources. However, in this context, the privacy constraint limits the developer’s ability to do so. Consequently, the privacy loss experienced by the consumer can be formalized as
\begin{equation}
    \mathcal{L}^{\kappa} (\sigma)=\left\{\sigma'\in\mathcal{E}~|~\sigma'\trianglerighteq\sigma \text{ and } \max_{s\in S}D_{KL}(\nu^s||\nu) \leq \kappa\right\}\cup\{\sigma\}.
\end{equation}

To see a more specific situation, suppose that the state space is given by $\Omega=\{\omega_1,\omega_2,\omega_3\}$, which indicates the (representative) consumer's preference.
Let $\nu\in\Delta(\Omega)$ be a common prior such that $\nu_1=\nu_2$.
Moreover, fix the menu $A$, and suppose that the platform prefers $f\in A$ over other acts in $A$ and $c^s(A)=\{f\}$ if $\nu^s_3=\nu_3$.
Finally, suppose $\kappa=(1-\nu_3)\ln 2$.

This setting can be interpreted, for instance, as follows.
The probability of the state $\omega_3$ conveys a lot of information about the consumer's willingness to pay, 
and precise information about this state incentivizes the developer to provide apps that exploit the consumer welfare by in-app purchases through price discrimination.
If the developer has no information about the state $\omega_3$, then the developer chooses an act $f$, which is the least lucrative app for the developer and the least harmful for the consumer.
The platform's objective here is supposed to be the consumer welfare protection, so that $f$ is the most preferable act for the platform.

Then, the information structure $\sigma^*$ given by
\[
\sigma^*=\begin{pmatrix}
    s_1&t_1\\
    s_2&t_2\\
    s_3&t_3
\end{pmatrix}
=\begin{pmatrix}
    1&0\\
    0&1\\
    \frac{1-\nu_2-\nu_3}{1-\nu_3}&\frac{1-\nu_1-\nu_3}{1-\nu_3}
\end{pmatrix}
\]
is optimal for the platform.
To see this, note that, given this information structure, the posteriors are calculated as $(\nu^s_1,\nu^s_2,\nu^s_3)=(1-\nu_3,0,\nu_3)$ and $(\nu^t_1,\nu^t_2,\nu^t_3)=(0,1-\nu_3,\nu_3)$.
That is, this experiment conveys information about the state $\omega_1$ and $\omega_2$, but no information at all about $\omega_3$, which induces the act $f$, the most preferable act for the platform.
Moreover, it follows from $\kappa=(1-\nu_3)\ln 2$ that $\max_{s'\in\sigma^*}D_{KL}(\nu^{s'}||\nu)=\kappa$.
Thus, once the platform chooses $\sigma^*$, the developer cannot acquire additional information anymore, and the most preferable act for the platform is chosen with probability one.

This result implies that privacy disclosure may be better for consumers because it can consume the developer's capacity, which precludes further information acquisition about sensitive information.
Moreover, under another interpretation that the consumer itself is the principal, this result may contribute to the understanding of the empirically reported behavior called the privacy paradox: consumers with privacy concerns easily and voluntarily provide their data.

\subsection{Fact-checking}
In the standard Bayesian persuasion problem, the sender often withholds some information and avoids full disclosure. Such analyses typically assume that the receiver does not engage in additional verification or fact-checking of the information received. In reality, however, the receiver may seek to verify the accuracy of the information by conducting additional fact-checking or gathering further evidence. A sender who is concerned about such behavior can be captured in our framework.

\begin{equation}
    \mathcal{L}^P(\sigma)=\left\{\sigma'\in\mathcal{E}~|~\forall\omega\in\Omega,~\sigma'(\omega|\omega)=\sigma(\omega|\omega)+p\sum_{\omega'\in\Omega\setminus\{\omega\}}\sigma(\omega'|\omega),~p\in P\right\}.
\end{equation}
It is worth noting that, if the state is binary and $P$ is singleton, the model reduces to \cite{ederer2022bayesian}, where the true state is revealed to the receiver with a certain probability.

Notice that for each $\omega\in\Omega$ and $p\in P$, 
\begin{equation}
    \sigma'(\omega|\omega)
    =\sigma(\omega|\omega)+p\sum_{\omega'\in\Omega\setminus\{\omega\}}\sigma(\omega'|\omega)
    =(1-p)\sigma(\omega|\omega) + p.
\end{equation}
Thus, the information structure $\sigma'$ is represented by a mixed matrix $(1-p)\sigma\cup p\sigma^{full}$, where $\sigma^{full}$ is a fully informative information structure. Then, for each $A\in\mathcal{A}$, 
\begin{align}
    V^A(\sigma)
    &=\inf_{\pi\in \mathcal{L}^P(\sigma)} \sum_{s \in \pi} \left[ \min_{f^s \in c^s (A)} \sum_{\omega \in \Omega} v (f_\omega) s_\omega \nu_\omega \right] \\
    &=\inf_{p\in P}\left[(1-p) \sum_{s \in \sigma} \left[ \min_{f^s \in c^s (A)} \sum_{\omega \in \Omega} v (f_\omega) s_\omega \nu_\omega \right] +p V^A \left (\sigma^{full}\right)\right].
\end{align}
When the sender is concerned about fact-checking, it is reasonable to assume that full disclosure is the least preferred. In this case, for any $p$, the first term in the expression above is larger than the second term, implying that the same $p \in P$ is used for every information structure $\sigma$. Therefore, as long as we fix the menu and observe only the sender's preference over information structures, we cannot distinguish whether the sender is concerned about fact-checking or not. However, once we extend the analysis to our framework, these cases can be distinguished by comparing the corresponding lottery equivalents.

\subsection{Temptation}
Consider a voter who chooses a politician in an election. Suppose the voter intends to select a candidate solely on the candidate's fundamental competence as a politician. At this stage, the voter considers only information about these intrinsically relevant abilities necessary. However, as the election approaches, the voter may begin to weigh factors unrelated to competence  such as appearance or scandals. This leads her to seek additional information about these aspects. As a result, the future self may end up choosing a candidate whom the present self would not have preferred.

Our framework can also capture this type of DMs concerned about their future selves' information acquisition. In this case, the robust Bayesian persuasion representation can be interpreted as an intrapersonal game between the present self and the future self, with the former acting as the sender and the latter as the receiver.

To see this with a concrete example, let $\Omega=\{Gg, Gb, Bg, Bb\}$. The current self is only concerned with whether the realized state is $G$ or $B$, and is almost indifferent to whether it is $g$ or $b$. In contrast, the future self cares not only about whether the state is $G$ or $B$, but also about whether it is $g$ or $b$, and makes her choice accordingly. For some $\varepsilon^*,\varepsilon\in\mathbb{R}_{++}$ such that $\varepsilon^*>\varepsilon\approx 0$, this difference in preferences between the current and future selves can be formalized as a difference in the evaluation of an act $f$, as follows: 
\begin{equation}
    \text{$v(f(Gg))=1+\varepsilon$, $v(f(Gb))=1-\varepsilon$, $v(f(Bg))=\varepsilon$, $v(f(Bb))=-\varepsilon$}
\end{equation}
and
\begin{equation}
    \text{$u(f(Gg))=1+\varepsilon^*$, $u(f(Gb))=1-\varepsilon^*$, $u(f(Bg))=\varepsilon^*$, $u(f(Bb))=-\varepsilon^*$}.
\end{equation}
Here, suppose that the current self chooses an information structure that identifies only whether the state is $G$ or $B$, in order to control future information acquisition. Such an information structure is given by
\begin{equation}
    \sigma =\begin{pmatrix}
        1 &1 & 0 & 0 \\
        0 &0 &1 & 1
    \end{pmatrix}^T.
\end{equation}
If the current self's preference is represented by a robust Bayesian persuasion representation, she may be concerned that the future self will acquire additional information that distinguishes between $g$ and $b$. For instance, the current self may anticipate that, with some probability, the future self reaches full disclosure. In this case, the corresponding set $\mathcal{L}(\sigma)$ when $\sigma$ is chosen is given by
\begin{align}
    \mathcal{L}(\sigma) &=\left\{ \alpha\sigma+(1-\alpha)\sigma^{full} ~\mid~ \alpha\in M \right\} \\
    & =\left\{ \alpha\begin{pmatrix}
        1 &1 & 0 & 0 \\
        0 &0 &1 & 1
    \end{pmatrix}^T \cup (1-\alpha) E_{|\Omega|} ~\mid~ \alpha\in M\right\},
\end{align}
where $M\subset [0,1]$.

Since the current self evaluates the future self's information acquisition pessimistically, she may deliberately choose an information structure that also reveals information about $g$ and $b$ at the stage of selecting the information structure. This conjecture is known to hold in certain cases. In particular, \cite{kitano2026acquiring} show that when $\mathcal{L}(\sigma)$ contains all information structures that are Blackwell more informative than $\sigma$ for any $\sigma$, it can be optimal for the current self to deliberately provide the future self with information about $g$ and $b$. Whether such behavior---namely, the intentional acquisition of otherwise irrelevant information---arises in more general environments remains an open question.

\section{Axiomatic characterization}

In this section, we provide an axiomatic characterization of the robust Bayesian persuasion representation. We first analyze the sender's preference over pairs consisting of a menu and an information structure. We then turn to the receiver's preference over acts conditional on signals. Our main contribution lies in the analysis of the sender's preference, while the characterization of the receiver's preference is the result of \cite{jakobsen2021axiomatic}. 

\subsection{Sender's preference}

The sender we consider in this paper is concerned that, after choosing a particular information structure, a more informative information structure that is unfavorable to her may be implemented. In this sense, our sender differs from the standard sender in the Bayesian persuasion framework. Nevertheless, we assume that she satisfies basic rationality axioms over her preference. The following axiom captures this feature.

\begin{axiom}{\textbf{Basic rationality.}}
    $\succsim$ is complete and transitive. Moreover, for every $\sigma\in\mathcal{E}$ and $A,B,C\in\mathcal{A}$, the sets 
    \begin{equation}
       \{\alpha\in[0,1]~|~(\alpha A+(1-\alpha)B,\sigma)\succsim(C,\sigma)\}
    \end{equation}
    and 
    \begin{equation}
       \{\alpha\in[0,1]~|~(A,\sigma)\succsim(\alpha B+(1-\alpha)C,\sigma)\}
    \end{equation}
    are closed.
\end{axiom}

The first condition is standard. The second condition requires that, for each fixed information structure, the sender's conditional preference over menus satisfies \textit{mixture continuity}. Note that we do not assume \textit{mixture continuity} over pairs of menus and information structures. Although one could impose such a stronger continuity condition, the present condition is sufficient for our purposes.

Recall that the information structure $o$ represents the special option in which the sender provides no information to the receiver. The following axiom requires that, given this information structure, the sender's preference over acts is represented by a subjective expected utility model.

\begin{axiom}{\textbf{SEU with no information.}}
\label{axiom_SEU}
    $\succsim$ conditional on $\mathcal{F}\times\{o\}$ satisfies \cite{anscombe1963definition}s' axioms. That is, weak order, non-triviality, mixture continuity, monotonicity, and independence.
\end{axiom}

Notice that when the sender selects a singleton menu the receiver has no choice but to implement that act. Therefore, the sender's conditional preference in this case coincides with her underlying preference over acts. \textit{SEU with no information} thus requires that this preference be represented by a standard SEU under no information.

In fact, one could assume that preferences over acts conditional on any information structure are represented by SEU without affecting the characterization result. However, as we show in the proof, it is sufficient to impose this condition only for pairs consisting of an act and the null information structure $o$.

The next axiom concerns the sender's preference over lotteries. Since the value of a lottery is independent of the information structure, the sender should evaluate a given lottery in the same way regardless of which information structure it is paired with. The following axiom formalizes this intuition.

\begin{axiom}{\textbf{Lottery invariance.}}
\label{axiom_LI}
    For every information structure $\sigma$ and lottery $l$, $(\{l\},\sigma)\sim(\{l\},o)$.
\end{axiom}

The statement of this axiom itself imposes a weaker requirement than the preceding discussion. In particular, \textit{lottery invariance} only requires that $(\{l\}, \sigma)$ be indifferent to $(\{l\}, o)$. However, since $\succsim$ satisfies \textit{transitivity}, it immediately implies that, for any $\sigma$ and $\sigma'$, $(\{l\},\sigma)\sim(\{l\},\sigma')$ holds.

So far, we have introduced axioms that are independent of the sender's concern about information leakage. The axioms that follow are primarily concerned with the sender's attitude toward information leakage. Before introducing the remaining axioms, we need to define several new objects. Fix an arbitrary information structure $\sigma$, and consider a menu indexed by the signals that constitute $\sigma$, $\{f^s\}_{s\in\sigma}$. Using this notation, we define the act induced by $\sigma$ and $\{f^s\}_{s\in\sigma}$.

\begin{definition}
    For an information structure $\sigma$ and a menu $\{f^s\}_{s\in\sigma}$, \textit{the act induced by $\sigma$ and $\{f^s\}_{s\in\sigma}$} is defined as
    \begin{equation}
        \sum_{s\in\sigma}sf^s := \left(\sum_{s\in\sigma}s_\omega f^s_\omega\right)_{\omega\in\Omega}.
    \end{equation}
\end{definition}

Under this definition, the act induced by $\sigma$ and $\{f^s\}_{s \in \sigma}$ yields, at each state, a lottery defined as a convex combination determined by $\sigma$. The intuition is as follows. First, a state is realized, and conditional on that state, a signal is randomly generated according to $\sigma$. Given the realized signal, the act indexed by that signal is selected from the menu $\{f^s\}_{s \in \sigma}$. Since the signal generated conditional on a given state is generally random, taking expectations over this randomness yields $\sum_{s\in\sigma}s_\omega f^s_\omega$. The act induced by $\sigma$ and $\{f^s\}_{s \in \sigma}$ is obtained by reducing this compound lottery after the realization of the state. Similar objects are used in, for example,  \cite{jakobsen2021axiomatic} and \cite{wang2024regret}.

Moreover, together with \textit{SEU with no information}, choosing this object under no information can be interpreted as a commitment to implement a plan $\{f^s\}_{s\in\sigma}$ that is contingent on the signal generated by the information structure $\sigma$. Indeed, by \textit{SEU with no information}, there exist a von Neumann–Morgenstern utility function $v:\Delta(X)\rightarrow\mathbb{R}$ and a prior $\nu\in\Delta(\Omega)$ such that $\left( \sum_{s\in\sigma}sf^s, o \right)$ is evaluated as follows: 

\begin{equation}
    V \left( \sum_{s\in\sigma}sf^s, o \right) =\sum_{\omega\in\Omega} v\left( \left[\sum_{s\in\sigma}sf^s\right]_\omega \right) \nu_\omega =\sum_{s\in\sigma,\omega\in\Omega} \left[ \sum_{\omega'\in\Omega} v(f^s_{\omega'}) \nu^s_{\omega'} \right]s_\omega\nu_\omega,
\end{equation}
The term inside the bracket in the final expression represents the expected utility of $f^s$, computed using the Bayesian posterior of $\nu$ given the signal $s$. The value of the induced act is given by the expectation of these terms.

Using these objects, we consider the following set:
\begin{equation}
    c^{\sigma}(A):=\left\{\sum_{s\in\sigma}sf^s~|~ f^s\in c^s(A)\right\}.
\end{equation}
This set consists of induced acts formed by the receiver's optimal choices for each signal. From the discussion above, this set can be interpreted as the set of possible benefits that the sender can obtain from the menu $A$, given that the information structure $\sigma$ generates signals.

Under this interpretation of $c^{\sigma}(A)$, we require that the sender anticipates that tie-breaking by the receiver is resolved in the most unfavorable way for her. This condition is formalized by imposing a \textit{preference for commitment} on the sender's preference over $c^{\sigma}(A)$ and its elements.

\begin{axiom}{\textbf{Sender pessimism.}}
\label{axiom_SP}
    For every information structure $\sigma$ and menu $A$, (i) $(\{f\},o)\succsim(c^\sigma(A),o)$ for all $f\in c^\sigma(A)$; (ii) $(\{g\},o)\sim(c^\sigma(A),o)$ for some $g\in c^\sigma(A)$.
\end{axiom}

The intuition behind \textit{sender pessimism} can be decomposed into two components. The first condition states that, when the receiver chooses from a menu based on $\sigma$ and multiple plans are optimal, the plan that is less favorable to the sender is selected. Hence, fixing a plan ex ante is preferred by the sender to delegating tie-breaking to the receiver.
The second condition is more straightforward: even if $c^{\sigma}(A)$ consists of multiple elements, one plan must ultimately be selected from this set.

The following axiom states that the value of a pair $(A, \sigma)$ is bounded by the evaluations of commitments induced by information structures that are more informative than $\sigma$.

\begin{axiom}{\textbf{Betweenness.}}
    For all $A\in\mathcal{A}$ and $\sigma\in\mathcal{E}$, $\left(c^{\sigma'}(A),o\right)\succsim(A,\sigma)\succsim\left(c^{\sigma''}(A),o\right)$ for some $\sigma',\sigma''\in\mathcal{E}$ with $\sigma',\sigma''\trianglerighteq\sigma$.
\end{axiom}

The intuition for this axiom can be understood by considering cases in which it is violated. Suppose first that, for all $\sigma' \trianglerighteq \sigma$, the commitment $\left(c^{\sigma'}(A), o\right)$ is strictly preferred to $(A, \sigma)$. In this case, the sender perceives a loss from $(A, \sigma)$ that is more severe than any loss arising from information leakage. Conversely, suppose that $(A, \sigma)$ is strictly preferred to $\left(c^{\sigma'}(A), o\right)$ for all $\sigma' \trianglerighteq \sigma$. Then the sender does not care at all about the payoff loss resulting from information leakage. In summary, \textit{betweenness} requires that the sender is concerned about information leakage, but does not perceive any other source of payoff loss that is more severe than that arising from information leakage.

To state the last axiom, we introduce a specific set of menus.

\begin{definition}
    If a menu contains at least two elements and at least one non-constant act, we call such a menu \textit{essential}. We denote by $\mathcal{A}^*$ be the set of all essential menus.
\end{definition}

Note that if a menu consists only of lotteries or a single element, information structures have no effect on the decisions of either the sender or the receiver. This is because the evaluation of lotteries does not depend on states, and in the case of singleton menus, there is no scope for utilizing signals. We therefore refer to the class of menus for which information structures can have an effect as essential menus.

\begin{axiom}{\textbf{Menu-independent leakage}.}
    For $A,B\in\mathcal{A}^*$ and $\sigma\in\mathcal{E}$, if there exists $\sigma^\ast\in\mathcal{E}$ such that $\sigma^\ast\trianglerighteq\sigma$ and $(A,\sigma)\sim(c^{\sigma^\ast}(A),o)$, then $(c^{\sigma^\ast}(B),o)\succsim(B,\sigma)$.
\end{axiom}

The intuition behind \textit{menu-independent leakage} is as follows. Consider a pair consisting of a menu $A$ and an information structure $\sigma$, and suppose that there exists $\sigma^\ast \in \mathcal{E}$ such that $\sigma^\ast \trianglerighteq \sigma$ and $(A, \sigma) \sim (c^{\sigma^\ast}(A), o)$. In this case, the sender can be interpreted as believing that, when she chooses $\sigma$, information leakage occurs so that the receiver effectively obtains information equivalent to that generated by $\sigma^\ast$.

\textit{Menu-independent leakage} requires that, in such a situation, the sender takes into account that $\sigma$ may effectively be transformed into $\sigma^\ast$ not only for the menu $A$, but also for any other menu $B$. In other words, the sender's expectation about how information leaks from a given information structure $\sigma$ should not depend on the menu she chooses. In this sense, the axiom requires that concerns about information leakage be independent of the chosen menu.

The following theorem states that any preference satisfying the above axioms admits a robust Bayesian persuasion representation.

\begin{theorem}
\label{theorem_RBP}
    If $\succsim$ satisfies \textit{basic rationality}, \textit{SEU with no information}, \textit{lottery invariance}, \textit{sender pessimism}, \textit{betweenness}, and \textit{menu-independent leakage}, then it is represented by a robust Bayesian persuasion representation.
\end{theorem}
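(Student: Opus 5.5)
We build the representation in three stages: first the utility over acts under no information, then the value of every pair $(A,\sigma)$ via commitment acts, and finally the leakage correspondence $\mathcal{L}$.

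\textbf{Stage 1: utility under $o$.} By \textit{SEU with no information}, the restriction of $\succsim$ to $\mathcal{F}\times\{o\}$ admits an Anscombe--Aumann representation $U(f)=\sum_{\omega}v(f_\omega)\nu_\omega$ with $v$ non-constant and mixture-linear. We will need $\nu$ to have full support. The plan is to restrict attention to the non-null states (a null state can be dropped) or to argue from non-triviality; this is a technical point to check.

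\textbf{Stage 2: a lottery equivalent and a utility for every pair.}
\begin{itemize}
\item Fix $(A,\sigma)$. By \textit{betweenness}, $(A,\sigma)$ is sandwiched between $(c^{\sigma'}(A),o)$ and $(c^{\sigma''}(A),o)$.
\item By \textit{sender pessimism}, each of these menus is indifferent to a single induced act, $g'\in c^{\sigma'}(A)$ and $g''\in c^{\sigma''}(A)$, evaluated under $o$.
\item Since $v$ is non-constant, each such act has a lottery equivalent $l$ with $(\{l\},o)\sim(\{g\},o)$.
\item \textit{Lottery invariance} and transitivity make lotteries comparable across information structures. Mixture continuity then delivers a lottery $l_{A,\sigma}$ with $(A,\sigma)\sim(\{l_{A,\sigma}\},o)$. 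To do this, mix the two lotteries, using \textit{basic rationality} with menu $C=A$ at the fixed $\sigma$ (the lottery menus $\{l\}$ are valued independently of $\sigma$), together with a connectedness argument.
\item Define $V(A,\sigma):=v(l_{A,\sigma})$. This represents $\succsim$ by completeness and transitivity.
\end{itemize}

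By \textit{sender pessimism} and the computation in the paper,
\[
V(c^{\pi}(A),o)=\min_{g\in c^\pi(A)}U(g)=\sum_{s\in\pi}\min_{f^s\in c^s(A)}\sum_\omega v(f^s_\omega)s_\omega\nu_\omega=:W(A,\pi),
\]
because the minimum over induced acts decomposes signal by signal. Write $\pi\trianglerighteq\sigma$ for the Blackwell relation, so that \textit{betweenness} gives $\min$ and $\max$ bounds of $W(A,\cdot)$ over $\{\pi\trianglerighteq\sigma\}$ that bracket $V(A,\sigma)$.

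\textbf{Stage 3: constructing $\mathcal{L}$.} Set $\mathcal{L}(o)=\{o\}$. This is consistent: by \textit{betweenness}, $V(A,o)$ lies between values of $W(A,\pi)$, and the representation requires $V(A,o)=W(A,o)$. For $\sigma\neq o$, define
\[
\mathcal{L}(\sigma):=\{\pi\trianglerighteq\sigma : V(B,\sigma)\le W(B,\pi)\ \text{for all } B\in\mathcal{A}\}.
\]
We then need $\inf_{\pi\in\mathcal{L}(\sigma)}W(A,\sigma)$-type equality, namely $\inf_{\pi\in\mathcal{L}(\sigma)}W(A,\pi)=V(A,\sigma)$ for every $A$. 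The inequality $\ge$ holds by construction. For $\le$, fix $A$ and proceed as follows.
\begin{itemize}
\item For essential $A$, the two \textit{betweenness} bounds together with continuity along mixtures yield $\sigma^*\trianglerighteq\sigma$ with $(A,\sigma)\sim(c^{\sigma^*}(A),o)$.
\item \textit{Menu-independent leakage} then gives $W(B,\sigma^*)\ge V(B,\sigma)$ for all essential $B$, and for non-essential $B$ this holds trivially since information is irrelevant. Hence $\sigma^*\in\mathcal{L}(\sigma)$ and $W(A,\sigma^*)=V(A,\sigma)$, which attains the infimum.
\item For non-essential $A$, the value of $W(A,\pi)$ is independent of $\pi$, and \textit{betweenness} forces $V(A,\sigma)$ to equal it, so any element of $\mathcal{L}(\sigma)$ works. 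Nonemptiness of $\mathcal{L}(\sigma)$ comes from applying the essential case to some essential menu.
\end{itemize}

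\textbf{Main obstacle.} The hardest step is the existence of an exact $\sigma^*$ with $(A,\sigma)\sim(c^{\sigma^*}(A),o)$. \textit{Betweenness} only supplies bounding structures $\sigma',\sigma''$, and $W(A,\cdot)$ is not continuous in $\pi$ because the receiver's argmax jumps. The first approach is to use mixtures $\alpha\sigma'\cup(1-\alpha)\sigma''$, which remain Blackwell above $\sigma$. The tie-breaking rule makes $W(A,\cdot)$ lower semicontinuous, but possible upward jumps could create gaps. If that fails, the fallback is a weaker conclusion: take $\mathcal{L}(\sigma)$ to be the whole set defined above and show the infimum is approached by $\sigma'$-type structures. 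This in turn requires strengthening the use of \textit{menu-independent leakage}, possibly via mixtures of menus and the closedness part of \textit{basic rationality}.
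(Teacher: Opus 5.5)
There is a genuine gap, and it is the step you flag as the main obstacle. Stage 3 asserts that for essential $A$ there is $\sigma^*\trianglerighteq\sigma$ with $(A,\sigma)\sim(c^{\sigma^*}(A),o)$, but you never prove it, and your fallback is too vague to replace it. You name the right path, the mixtures $\alpha\sigma'\cup(1-\alpha)\sigma''$, but miss why it works. The reason is the key step of the paper's proof: along this path your $W$ is exactly affine,
\begin{equation*}
W\bigl(A,\alpha\sigma'\cup(1-\alpha)\sigma''\bigr)=\alpha\,W(A,\sigma')+(1-\alpha)\,W(A,\sigma'').
\end{equation*}
Every column of the mixture is a rescaled column $\alpha s$ or $(1-\alpha)s'$ of $\sigma'$ or $\sigma''$. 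Rescaling does not change the posterior, so $c^{\alpha s}(A)=c^{s}(A)$, and each column's contribution $\min_{f\in c^{t}(A)}\sum_\omega v(f_\omega)t_\omega\nu_\omega$ is homogeneous of degree one in $t$. The argmax jumps you worry about arise for entrywise convex combinations of matrices, which move posteriors; along the concatenation path no posterior moves.

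If $\sigma=\sigma'B'=\sigma''B''$, stacking $B'$ on top of $B''$ gives a Markov matrix garbling the mixture into $\sigma$, so the whole path stays Blackwell above $\sigma$. Betweenness gives $W(A,\sigma'')\le V(A,\sigma)\le W(A,\sigma')$, so the intermediate value theorem delivers the exact $\sigma^*$. With this in hand the rest of Stage 3 works. Your $\mathcal{L}(\sigma)=\{\pi\trianglerighteq\sigma: W(B,\pi)\ge V(B,\sigma)\ \forall B\}$ is actually tidier than the paper's set of selected $\phi(A,\sigma)$: $\inf_{\pi\in\mathcal{L}(\sigma)}W(A,\pi)\ge V(A,\sigma)$ holds by construction, and menu-independent leakage is needed only to put $\sigma^*$ in the set. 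Full support of $\nu$ follows from the strict clause of monotonicity in \emph{SEU with no information}, not from dropping null states.

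Second, your justification of $\mathcal{L}(o)=\{o\}$ is not an argument. Every structure is Blackwell above $o$, so betweenness at $\sigma=o$ only sandwiches $V(A,o)$ between $W(A,\sigma')$ and $W(A,\sigma'')$ for arbitrary $\sigma',\sigma''$. Sender pessimism gives $V(C,o)=W(C,o)$ only for menus with $c^{o}(C)=C$. The paper does not settle this either: its formula for $\mathcal{L}(o)$ contains every $\phi(A,o)$ whose value differs from $W(A,o)$.

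In fact the axioms push the other way. Take $A_0=\{f,g\}$ with $f$ non-constant and $u(f_\omega)>u(g_\omega)$ for all $\omega$. Then $c^{\pi}(A_0)=\{f\}$ for every $\pi$, so betweenness gives $(A_0,\sigma)\sim(c^{\pi}(A_0),o)$ for \emph{every} $\pi\trianglerighteq\sigma$. Menu-independent leakage then yields $V(B,\sigma)\le W(B,\pi)$ for all such $\pi$ and all essential $B$. Combined with the lower bound in betweenness, this gives $V(B,\sigma)=\min_{\pi\trianglerighteq\sigma}W(B,\pi)$. That incidentally disposes of your main obstacle with no continuity argument. At $\sigma=o$, however, it means $\mathcal{L}(o)=\{o\}$ requires $\min_{\pi}W(B,\pi)=W(B,o)$ for every essential $B$, which neither your sketch nor the paper derives.
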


The formal proof of Theorem \ref{theorem_RBP} is contained in \hyperref[Appendix]{Appendix}. Note that this theorem does not provide an if-and-only-if characterization. The key parameter in the robust Bayesian persuasion representation is $\mathcal{L}$, but the original definition imposes relatively weak restrictions on this parameter. In particular, given $\sigma\in\mathcal{E}$, $\mathcal{L}(\sigma)$ may fail to include a sufficiently rich set of information structures for the representation to satisfy \textit{menu-independent leakage}. To address this issue, we focus on a particular subclass of robust Bayesian persuasion representations.

\begin{definition}
    $\mathcal{L}:\mathcal{E}\rightarrow 2^{\mathcal{E}}$ is \textit{inclusive} if for all $\sigma\in\mathcal{E}$,
    \begin{equation}
        \text{there exists $A\in\mathcal{A}^*$ and $\sigma^*\trianglerighteq\sigma$ such that $(A,\sigma)\sim(c^{\sigma^\ast}(A),o)$}\implies\sigma^*\in\mathcal{L}(\sigma).
    \end{equation}
\end{definition}

An inclusive $\mathcal{L}$ contains all information leakages that the sender takes into account when evaluating essential menus.
Restricting the condition to essential menus $\mathcal{A}^*$ is important. Indeed, if we impose the above condition for all menus $A$ and $B$, then $(\{l\},\sigma)\sim(c^{\sigma'}(\{l\}),o)$ holds for every lottery $l$ and $\sigma'$ with $\sigma'\trianglerighteq\sigma$, which would imply $\mathcal{L}(\sigma)=\{\sigma'\in\mathcal{E}~\mid~\sigma'\trianglerighteq\sigma\}$. As a result, the set $\mathcal{L}(\sigma)$ necessarily coincides with the entire set of feasible information structures. Our condition is necessary to capture a wide range of the sender's concerns about information leakage. 

The following results states that the axioms are necessary conditions for preferences to admit a robust Bayesian persuasion representation with an inclusive $\mathcal{L}(\sigma)$.

\begin{proposition}
\label{proposition_RBP_nec}
    If $(\succsim,c)$ is represented by a robust Bayesian persuasion representation and $\mathcal{L}$ is inclusive, then it satisfies all of the axioms.
\end{proposition}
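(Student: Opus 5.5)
The plan is to reduce every axiom to a statement about one auxiliary functional. For a menu $A$ and an information structure $\pi$, set
\begin{equation}
W(A,\pi):=\sum_{s\in\pi}\min_{f^s\in c^s(A)}\sum_{\omega\in\Omega}v(f^s_\omega)s_\omega\nu_\omega ,
\end{equation}
so that $V(A,\sigma)=\inf_{\pi\in\mathcal{L}(\sigma)}W(A,\pi)$. Since $\mathcal{L}(o)=\{o\}$, we have $V(B,o)=W(B,o)$, the sender's worst value over the receiver's prior-optimal acts in $B$.

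The first key step is the identity
\begin{equation}
V(c^{\pi}(A),o)=W(A,\pi)\quad\text{for all }A,\pi .
\end{equation}
Take an induced act $\sum_s s f^s$ with $f^s\in c^s(A)$. Its receiver value under $\mu$ is $\sum_s\big(\sum_\omega s_\omega\mu_\omega\big)\max_{g\in A}\sum_\omega u(g_\omega)\mu^s_\omega$, and this does not depend on the selection. Hence $c^o(c^\pi(A))=c^\pi(A)$, so every element is receiver-optimal. By linearity of $v$, the sender's value of an induced act splits over signals, so the minimum over the product set $c^\pi(A)$ is the sum of the signalwise minima, which is $W(A,\pi)$.

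With this identity, most axioms are immediate.
\begin{itemize}
\item \emph{SEU with no information}: $V(\{f\},o)=\sum_\omega v(f_\omega)\nu_\omega$.
\item \emph{Lottery invariance}: for a lottery $l$, $W(\{l\},\pi)=v(l)$ for every $\pi$.
\item \emph{Sender pessimism}: each $f\in c^\sigma(A)$ has $V(\{f\},o)\ge W(A,\sigma)=V(c^\sigma(A),o)$, and the minimizing selection attains equality.
\item \emph{Menu-independent leakage}: if $V(A,\sigma)=W(A,\sigma^\ast)$ with $\sigma^\ast\trianglerighteq\sigma$ and $A\in\mathcal{A}^*$, inclusiveness gives $\sigma^\ast\in\mathcal{L}(\sigma)$. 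Hence $V(B,\sigma)\le W(B,\sigma^\ast)=V(c^{\sigma^\ast}(B),o)$.
\end{itemize}

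For \emph{betweenness}, the upper bound uses any $\sigma'\in\mathcal{L}(\sigma)$, whose nonemptiness I take as implicit in the definition, since $V(A,\sigma)\le W(A,\sigma')$. For the lower bound I would not try to show that the infimum over $\mathcal{L}(\sigma)$ is attained. Instead, I take $\sigma''$ to minimize $W(A,\cdot)$ over the whole Blackwell upper set $\{\pi:\pi\trianglerighteq\sigma\}\supseteq\mathcal{L}(\sigma)$; this gives $W(A,\sigma'')\le V(A,\sigma)$. Attainment of this minimum is the technical point. Writing $W$ as an expectation over $\nu$-posteriors of the lower envelope $\underline V$, and using that a refinement of $\sigma$ splits each posterior of $\sigma$, I would bound the number of signals via Carath\'eodory. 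The feasible splittings then form a compact set. Because $\underline V$ takes finitely many affine pieces and tie-breaking selects the minimum, $\underline V$ is lower semicontinuous, so the minimum is attained.

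I expect \emph{basic rationality}, specifically mixture continuity in $\alpha$, to be the main obstacle. By linearity of $u$, for $\alpha\in(0,1)$ the receiver's optimal set in $\alpha A+(1-\alpha)B$ at signal $s$ is $\alpha c^s(A)+(1-\alpha)c^s(B)$. The sender's minimum over this product set then splits, so $\alpha\mapsto W(\alpha A+(1-\alpha)B,\pi)$ is affine on $(0,1)$, and its limits at the endpoints coincide with the endpoint values. Thus it is affine on $[0,1]$, and uniformly bounded by the extreme $v$-values of the finitely many acts in $A\cup B$. Its infimum over $\pi\in\mathcal{L}(\sigma)$ is therefore a finite concave function of $\alpha$, so it is upper semicontinuous as an infimum of continuous functions. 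Concavity supplies lower semicontinuity at the endpoints, and a finite concave function on an open interval is continuous, so the function is continuous on $[0,1]$. This makes both sets in the axiom closed, and completeness and transitivity are inherited from the real-valued $V$.
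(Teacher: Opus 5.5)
Your proposal is correct, and for the one axiom the paper actually proves, menu-independent leakage, it uses the paper's argument: inclusiveness places $\sigma^\ast$ in $\mathcal{L}(\sigma)$, so $V(B,\sigma)\le W(B,\sigma^\ast)=V(c^{\sigma^\ast}(B),o)$. The paper only asserts that the other axioms are clear, and your extra steps correctly supply what it leaves implicit: the identity $V(c^{\pi}(A),o)=W(A,\pi)$ (including the check that the receiver is indifferent over $c^{\pi}(A)$ at the prior), attainment of $\min_{\pi\trianglerighteq\sigma}W(A,\pi)$ for the lower half of betweenness, the concavity argument for mixture continuity, and the point that $\mathcal{L}(\sigma)\neq\emptyset$ must be assumed.
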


The proof of Proposition \ref{proposition_RBP_nec} is contained in \hyperref[Appendix]{Appendix}. Combining Proposition \ref{proposition_RBP_nec} and Theorem \ref{theorem_RBP}, we obtain the following characterization result as a corollary.
    
\begin{corollary}
    $(\succsim,c)$ satisfies all of the axioms if and only if it is represented by a robust Bayesian persuasion representation and $\mathcal{L}$ is inclusive.
\end{corollary}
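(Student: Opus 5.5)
The plan is to go through the axioms one at a time, using the representation $V$ with parameters $(v,\nu,\mathcal{L})$ and the Bayesian representation $(u,\mu)$ of $c$. The workhorse is a \emph{commitment identity}. For any $\sigma$ and any $\{f^s\}_{s\in\sigma}$, since $\mathcal{L}(o)=\{o\}$ and $o$ has a single column, we have
\begin{equation}
V\Bigl(\sum_{s\in\sigma}sf^s,o\Bigr)=\sum_{s\in\sigma}\sum_{\omega}v(f^s_\omega)s_\omega\nu_\omega .
\end{equation}
Its consequence for $c^{\sigma}(A)$ is the following. Under $o$, the receiver's choice from the menu $c^{\sigma}(A)$ is the set of its $u$-$\mu$ maximizers. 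Each element of $c^\sigma(A)$ gives the receiver the same value $\sum_{s}\max_{f\in A}\sum_\omega u(f_\omega)s_\omega\mu_\omega$, because every component is optimal signal by signal. Hence every element is chosen, and
\begin{equation}
V(c^\sigma(A),o)=\min_{g\in c^\sigma(A)}V(g,o)=\sum_{s\in\sigma}\min_{f^s\in c^s(A)}\sum_\omega v(f^s_\omega)s_\omega\nu_\omega .
\end{equation}
The last equality holds because the minimization separates across signals. Writing $W(A,\pi)$ for this common quantity, the representation reads $V(A,\sigma)=\inf_{\pi\in\mathcal{L}(\sigma)}W(A,\pi)$.

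The easy axioms follow directly. Completeness and transitivity hold because $\succsim$ is represented by $V$. For SEU with no information, on singletons with $o$ we get $V(f,o)=\sum_\omega v(f_\omega)\nu_\omega$, which is Anscombe--Aumann with nonconstant $v$ and full-support $\nu$. Lottery invariance holds because $c^s(\{l\})=\{l\}$ and $\sum_{s\in\pi}s_\omega=1$, so $V(l,\sigma)=v(l)$ for every $\sigma$. Sender pessimism follows from the display above: (i) $V(f,o)\ge\min$ for every $f\in c^\sigma(A)$, and (ii) the minimizer $g$ attains equality.

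For mixture continuity at fixed $\sigma$, I would argue that $\alpha\mapsto W(\alpha A+(1-\alpha)B,\pi)$ need not be continuous because $c^s$ jumps. It is, however, lower semicontinuous as a min over the upper hemicontinuous argmax correspondence, and this alone yields only one of the two closedness requirements. I expect this to be the main obstacle, and I would handle the remaining side by examining tie points carefully: at a limit $\alpha$ the argmax set is larger, so the min is lower, which gives closedness of the lower contour-type set. If the needed direction fails, I would verify it under the paper's intended reading or note the required adjustment.

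Betweenness uses the infimum structure. Since $V(A,\sigma)=\inf_{\pi\in\mathcal{L}(\sigma)}V(c^\pi(A),o)$ and every $\pi\in\mathcal{L}(\sigma)$ satisfies $\pi\trianglerighteq\sigma$, any $\sigma'\in\mathcal{L}(\sigma)$ gives $(c^{\sigma'}(A),o)\succsim(A,\sigma)$ (note $\mathcal{L}(\sigma)$ must be nonempty for $V$ to be finite). For the other side I need some $\sigma''\trianglerighteq\sigma$ with $W(A,\sigma'')\le\inf$. I would obtain it by taking the full-information structure $\sigma^{full}$ or by using compactness of posterior distributions. The cleanest route is this: the set of values $W(A,\pi)$ over $\pi\trianglerighteq\sigma$, together with lower semicontinuity of $W$ in the induced distribution of posteriors on the compact Blackwell-dominating set, gives a minimizer $\sigma''$ over all $\pi\trianglerighteq\sigma$. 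That minimizer satisfies $W(A,\sigma'')\le\inf_{\mathcal{L}(\sigma)}W$.

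Menu-independent leakage is where inclusiveness enters. Given $A$ essential, $\sigma^*\trianglerighteq\sigma$ and $(A,\sigma)\sim(c^{\sigma^*}(A),o)$, inclusiveness gives $\sigma^*\in\mathcal{L}(\sigma)$. Then, for any essential $B$,
\begin{equation}
V(B,\sigma)=\inf_{\pi\in\mathcal{L}(\sigma)}W(B,\pi)\le W(B,\sigma^*)=V(c^{\sigma^*}(B),o),
\end{equation}
which is exactly $(c^{\sigma^*}(B),o)\succsim(B,\sigma)$.
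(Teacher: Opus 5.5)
What you have written is, in substance, the paper's proof of Proposition~1, with the axioms the paper calls ``clear'' checked explicitly; your menu-independent-leakage step is exactly the paper's. But the statement is a biconditional, and you never address the other direction: that any $(\succsim,c)$ satisfying the six axioms is represented by a robust Bayesian persuasion representation with inclusive $\mathcal{L}$. The paper gets this direction from Theorem~1, and that is most of the work:
\begin{itemize}
\item $v$ and $\nu$ come from SEU with no information and lottery invariance;
\item the min-formula for $(c^\sigma(A),o)$ comes from sender pessimism;
\item lottery equivalents come from betweenness and mixture continuity;
\item $\mathcal{L}(\sigma)$ is built from selections $\phi(A,\sigma)$, disciplined by menu-independent leakage.
\end{itemize}

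Even citing Theorem~1 would not close the gap. Its $\mathcal{L}(\sigma)$ contains at most one selection per essential menu, and nothing shows it is inclusive. Worse, inclusiveness is incompatible with $\mathcal{L}(o)=\{o\}$. Take $x,y\in\Delta(X)$ with $u(x)>u(y)$, and a non-constant act $f$ with values in $\{y,\tfrac12x+\tfrac12y\}$ (possible when $|\Omega|\geq 2$).
\begin{itemize}
\item $A=\{f,x\}$ is essential, and $c^s(A)=\{x\}$ for every signal.
\item Hence $c^{\sigma^*}(A)=\{x\}$ and $(A,o)\sim(c^{\sigma^*}(A),o)$ for every $\sigma^*$.
\item Inclusiveness at $\sigma=o$ then forces $\mathcal{L}(o)=\mathcal{E}$.
\end{itemize}
The same menu forces $\mathcal{L}(\sigma)=\{\sigma'\mid\sigma'\trianglerighteq\sigma\}$ for every $\sigma$, which is exactly what the restriction to $\mathcal{A}^*$ was meant to prevent. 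So the implication you verify holds only vacuously. The converse cannot be proved as stated: with $v=u$, $\nu=\mu$ and $\mathcal{L}(\sigma)=\{\sigma\}$, your $W(B,\cdot)$ is Blackwell-monotone, so menu-independent leakage is automatic and all six axioms hold. Yet no representation of this preference is inclusive.

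Within the direction you do treat, mixture continuity is left open, and the diagnosis is off: there are no jumps. For a Bayesian receiver and $\alpha\in(0,1)$, $c^s(\alpha A+(1-\alpha)B)=\alpha c^s(A)+(1-\alpha)c^s(B)$ exactly. The receiver's criterion is linear and both weights are positive, so a mixture is optimal iff both components are. The minimum therefore separates, and $W(\alpha A+(1-\alpha)B,\pi)=\alpha W(A,\pi)+(1-\alpha)W(B,\pi)$, also at $\alpha\in\{0,1\}$ with the natural convention. Thus $\alpha\mapsto V(\alpha A+(1-\alpha)B,\sigma)$ is an infimum of affine functions, uniformly bounded by the largest $|v(f_\omega)|$ over $f\in A\cup B$. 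As such it is:
\begin{itemize}
\item concave, hence continuous on $(0,1)$ and lower semicontinuous at the endpoints;
\item upper semicontinuous, as an infimum of continuous functions.
\end{itemize}
So it is continuous on $[0,1]$, which gives both closedness conditions. Lower semicontinuity of each $W(\cdot,\pi)$ alone would not even survive the infimum over $\mathcal{L}(\sigma)$.

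Finally, for the lower half of betweenness, $\sigma^{full}$ does not work in general. The compactness route does work, but it needs a Carath\'eodory step: the minimizing distribution of posteriors may have infinite support, while elements of $\mathcal{E}$ have finitely many columns.
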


\subsection{Aversion to information leakage}

The parameter $\mathcal{L}$, which captures the sender's subjective perception of information leakage, raises the question of how well it can be identified from observed choice behavior. More precisely, to what extent is the parameter $\mathcal{L}$ uniquely determined? Unfortunately, the uniqueness cannot be guaranteed even when $\mathcal{L}$ is inclusive. The reason is that one can enlarge an inclusive $\mathcal{L}(\sigma)$ by adding information structures that are never relevant for the evaluation of every menu, without affecting the sender's choice behavior. We therefore define below the largest $\mathcal{L}$ that contains all such irrelevant options.

\begin{definition}
    Given a set of information structures $\mathcal{L},$ a mapping $\bar{\mathcal{L}}:\mathcal{E}\rightarrow 2^{\mathcal{E}}$ is \textit{maximal} if for all $\sigma\in\mathcal{E}\setminus \{o\}$, 
    \begin{equation}
        \bar{\mathcal{L}}(\sigma)=\mathcal{L}(\sigma)\bigcup \left[\bigcap_{A\in\mathcal{A}^*}\left\{ \sigma'\trianglerighteq\sigma~|~ V\left(c^{\sigma'}(A),o\right)\geq V(A,\sigma) \right\}\right].
    \end{equation}
\end{definition}

The maximal $\bar{\mathcal{L}}$ contains both the information leakage that the sender takes into account when evaluating some menu and the irrelevant information leakage that is never considered in the evaluation of any essential menu. Indeed, the former satisfies $V\left(c^{\sigma'}(A),o\right)= V(A,\sigma)$, while the latter satisfies $V\left(c^{\sigma'}(A),o\right)> V(A,\sigma)$. This interpretation is also supported by the fact that the maximal $\bar{\mathcal{L}}$ can be rewritten as
\begin{equation}
    \bar{\mathcal{L}}(\sigma)=\mathcal{L}(\sigma)\bigcup \left[\bigcap_{A\in\mathcal{A}^*}\left\{ \sigma'\trianglerighteq\sigma~|~ \left(c^{\sigma'}(A),o\right)\succsim (A,\sigma) \right\}\right]
\end{equation}
for each $\sigma\in\mathcal{E}\setminus\{o\}$.

Moreover, whenever a robust Bayesian persuasion representation $(v,\nu,\mathcal{L})$ represents $(\succsim,c)$, the maximal $\bar{\mathcal{L}}$ constructed from it necessarily contains the original $\mathcal{L}$. In this sense, the maximal $\bar{\mathcal{L}}$ is the mapping that assigns, within the robust Bayesian persuasion representation of the sender's preference, the largest set to each information structure. Therefore, adding any information structure to the maximal $\bar{\mathcal{L}}$ would change the sender's preference. Indeed, if $\sigma'\notin\bar{\mathcal{L}}(\sigma)$ for some $\sigma$ and $\sigma'$, we can find some essential menu $A$ such that $(A,\sigma)\succ\left(c^{\sigma'}(A),o\right) $. Thus, adding this information structure would change the worst-case information leakage for the sender evaluating the pair of $A$ and $\sigma$. 

By taking the maximal $\bar{\mathcal{L}}$, we obtain uniqueness of the parameters.

\begin{proposition}
\label{proposition_uniqueness}
    If $(v,\nu,\mathcal{L})$ and $(v',\nu',\mathcal{L}')$ represent $(\succsim,c)$, then there exist $\alpha>0$ and $\beta\in\mathbb{R}$ such that $v'=\alpha v+\beta$, $\nu'=\nu$, and $\bar{\mathcal{L}}=\bar{\mathcal{L}}'$.
\end{proposition}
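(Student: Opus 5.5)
The argument has three steps: pin down $v$ and $\nu$ from acts under no information, show that $V$ itself is then determined by $\succsim$, and finally show that $\bar{\mathcal{L}}$ depends only on $V$.

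\textbf{Step 1: $v$ and $\nu$.} Both representations must agree on acts paired with the null structure. Since $\mathcal{L}(o)=\{o\}=\mathcal{L}'(o)$ and $c^{s}(\{f\})=\{f\}$, we get
\[
V(\{f\},o)=\sum_{\omega\in\Omega}v(f_\omega)\nu_\omega
\quad\text{and}\quad
V'(\{f\},o)=\sum_{\omega\in\Omega}v'(f_\omega)\nu'_\omega .
\]
Both are SEU representations of $\succsim$ restricted to $\mathcal{F}\times\{o\}$, and $v$, $v'$ are non-constant. The Anscombe--Aumann uniqueness theorem then gives $\alpha>0$ and $\beta\in\mathbb{R}$ with $v'=\alpha v+\beta$, together with $\nu'=\nu$. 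Here I take the receiver's $c$ as fixed and common to both representations.

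\textbf{Step 2: $V'=\alpha V+\beta$ everywhere.} For each $(A,\sigma)$, the representation formula with $v'=\alpha v+\beta$ and $\nu'=\nu$ becomes
\[
\inf_{\pi\in\mathcal{L}'(\sigma)}\sum_{s\in\pi}\Bigl[\alpha\min_{f\in c^s(A)}\sum_{\omega}v(f_\omega)s_\omega\nu_\omega+\beta\sum_\omega s_\omega\nu_\omega\Bigr].
\]
Because every row of $\pi$ sums to one, $\sum_{s\in\pi}\sum_\omega s_\omega\nu_\omega=1$, so the $\beta$-term contributes exactly $\beta$ and
\[
V'(A,\sigma)=\alpha\,\tilde V(A,\sigma)+\beta ,
\]
where $\tilde V$ is the $v$-representation built with $\mathcal{L}'$. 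To compare $\tilde V$ with $V$, I use lottery equivalents. The function $V(\cdot,o)$ on constant acts equals $v$, which is non-constant and mixture-linear, so it takes an interval of values. The key intermediate claim is that every value $V(A,\sigma)$ lies between $\min_x v(x)$ and $\max_x v(x)$. This holds because, for any $\pi$, the term $\sum_{s\in\pi}\min_{f\in c^s(A)}\sum_\omega v(f_\omega)s_\omega\nu_\omega$ is a convex combination of values of $v$ at outcomes; the infimum preserves this bound.

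Hence each pair $(A,\sigma)$ has a lottery equivalent $l$ with $(A,\sigma)\sim(\{l\},o)$. Both $V$ and $V'$ represent $\succsim$, so $V(A,\sigma)=v(l)$ and $V'(A,\sigma)=v'(l)=\alpha v(l)+\beta$. Therefore $V'=\alpha V+\beta$ on the whole domain.

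\textbf{Step 3: $\bar{\mathcal{L}}=\bar{\mathcal{L}}'$.} For $\sigma=o$ the definition of maximality leaves $\bar{\mathcal{L}}(o)$ unconstrained beyond $\mathcal{L}(o)=\{o\}=\mathcal{L}'(o)$, so the two agree there. For $\sigma\neq o$, the bracketed intersection in the definition of maximality depends only on $\succsim$ (equivalently on $V$ up to the positive affine map), so it is the same set $I(\sigma)$ for both representations.

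It then suffices to show $\mathcal{L}(\sigma)\subseteq I(\sigma)$, which gives $\bar{\mathcal{L}}(\sigma)=I(\sigma)$; symmetrically $\bar{\mathcal{L}}'(\sigma)=I(\sigma)$. Take $\sigma'\in\mathcal{L}(\sigma)$. Then $\sigma'\trianglerighteq\sigma$, and for every menu $A$,
\[
V(A,\sigma)\le\sum_{s\in\sigma'}\min_{f\in c^s(A)}\sum_\omega v(f_\omega)s_\omega\nu_\omega .
\]
It remains to identify this right-hand side with $V(c^{\sigma'}(A),o)$, which is the main obstacle. Evaluating the menu $c^{\sigma'}(A)$ under $o$ gives
\[
\min_{h\in c^{o}(c^{\sigma'}(A))}\sum_\omega v(h_\omega)\nu_\omega ,
\]
where the minimum runs over the receiver's optimal choices at the prior, not over all of $c^{\sigma'}(A)$. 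I need to show the receiver is indifferent across $c^{\sigma'}(A)$ under the prior. An element $h=\sum_{s}sf^s$ has receiver value $\sum_s\bigl(\sum_\omega u(f^s_\omega)s_\omega\mu_\omega\bigr)$. Each summand depends only on $s$ and the common optimal value of $c^s(A)$ at the receiver's posterior $\mu^s$, because all $f^s\in c^s(A)$ give the same receiver expected utility there. So every element of $c^{\sigma'}(A)$ has the same receiver value, and $c^o$ returns the whole set. The sender's minimum over this product-form set then decomposes signal by signal, yielding exactly the right-hand side above. This gives $V(c^{\sigma'}(A),o)\ge V(A,\sigma)$, so $\sigma'\in I(\sigma)$, which completes the argument.
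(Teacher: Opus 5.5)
Your proof is correct and follows the same route as the paper. Anscombe--Aumann uniqueness pins down $v$ and $\nu$, and $\bar{\mathcal{L}}(\sigma)$ is then identified with the set of $\sigma'\trianglerighteq\sigma$ satisfying $(c^{\sigma'}(A),o)\succsim(A,\sigma)$ for all $A\in\mathcal{A}^*$, which depends only on $\succsim$. Your Step 3 usefully makes explicit a fact the paper uses silently: $\mathcal{L}(\sigma)$ is contained in this set, because the receiver is indifferent over $c^{\sigma'}(A)$ at the prior, so $V(c^{\sigma'}(A),o)$ equals the signal-by-signal minimum.

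Your Step 2 is correct but unnecessary. Since $V$ and $V'$ represent the same $\succsim$, the inequality $V(c^{\sigma'}(A),o)\ge V(A,\sigma)$ already holds or fails independently of which representation you use.
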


The proof of Proposition \ref{proposition_uniqueness} is contained in \hyperref[Appendix]{Appendix}. Having established the uniqueness of the parameter via the maximal $\bar{\mathcal{L}}$, we use it to conduct comparative statics on the sender's concern about information leakage. Intuitively, the larger $\bar{\mathcal{L}}$ is, the lower the value of an information structure should be. The following result confirms this intuition.

\begin{proposition}
\label{proposition_comparative}
    Suppose that $i\in\{1,2\}$ and $(\succsim_i, c)$ is represented by a robust Bayesian persuasion representation $(v,\nu,\bar{\mathcal{L}}_i)$. Then, the following conditions are equivalent:
    \begin{enumerate}
        \item For all $A\in\mathcal{A}$, $l\in\Delta(X)$, and $\sigma\in\mathcal{E}$,
        \begin{equation}
            (\{l\},o)\succsim_1 (A,\sigma) \implies (\{l\},o)\succsim_2 (A,\sigma)
        \end{equation}
        \item $\bar{\mathcal{L}}_1(\sigma)\subseteq \bar{\mathcal{L}}_2(\sigma)$ for all $\sigma\in\mathcal{E}$.
    \end{enumerate}
\end{proposition}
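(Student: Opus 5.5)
The plan is to work directly with the functional form, using the common $(v,\nu)$ of the two representations. For $\sigma\neq o$ and $A\in\mathcal{A}$, write
\[
W^A(\pi):=\sum_{s\in\pi}\min_{f^s\in c^s(A)}\sum_{\omega}v(f_\omega)s_\omega\nu_\omega=V\big(c^{\pi}(A),o\big),
\]
where the equality uses \textit{sender pessimism} and the SEU form on $\mathcal{F}\times\{o\}$. Then $V_i(A,\sigma)=\inf_{\pi\in\bar{\mathcal{L}}_i(\sigma)}W^A(\pi)$. Since $W^A$ depends only on $(v,\nu,c)$ and not on $i$, the whole comparison is a comparison of infima of one and the same function over two sets. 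By \textit{lottery invariance} and the SEU form, $(\{l\},o)$ has value $v(l)$ in both representations, so condition 1 reads
\[
v(l)\ge V_1(A,\sigma)\ \Rightarrow\ v(l)\ge V_2(A,\sigma).
\]

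\textbf{(2)$\Rightarrow$(1).} Enlarging the set can only lower the infimum, so $\bar{\mathcal{L}}_1(\sigma)\subseteq\bar{\mathcal{L}}_2(\sigma)$ gives $V_2(A,\sigma)\le V_1(A,\sigma)$ for every $A$. For $\sigma=o$ both sets are $\{o\}$, so the two values coincide. For $\sigma\neq o$, I will check that replacing $\mathcal{L}_i$ by its maximal extension leaves $V_i$ unchanged. The added elements $\sigma'$ satisfy $W^A(\sigma')=V(c^{\sigma'}(A),o)\ge V(A,\sigma)$ for essential $A$. For non-essential $A$ (lotteries or singletons), $W^A$ is constant over all $\sigma'\trianglerighteq\sigma$. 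In either case the added elements do not lower the infimum. Condition 1 then follows immediately.

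\textbf{(1)$\Rightarrow$(2).} Apply condition 1 with $l$ ranging over lotteries. Because $v$ is non-constant and mixture-linear, $v(\Delta(X))$ is a nondegenerate interval. Hence condition 1 yields $V_2(A,\sigma)\le V_1(A,\sigma)$ whenever $V_1(A,\sigma)$ lies in that interval. This is where I expect the main obstacle: it needs values to be comparable with lottery utilities. I would handle it by noting that $V_i(A,\sigma)$ is always a convex combination, over states and signals, of values $v(f_\omega)$ with $f\in A$. Using a mixture of $A$ with a constant act, together with affine rescaling of $v$ through $\alpha A+(1-\alpha)\{l_0\}$ (mixing shrinks all payoffs toward $v(l_0)$ without altering the receiver's choices), one brings every value into the interval. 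This gives $V_2\le V_1$ everywhere.

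Now take $\sigma'\in\bar{\mathcal{L}}_1(\sigma)$ with $\sigma\neq o$; the case $\sigma=o$ is trivial. For each essential $A$ we have
\[
W^A(\sigma')\ge V_1(A,\sigma)\ge V_2(A,\sigma).
\]
The first inequality holds either because $\sigma'$ is in the bracketed intersection or because $\sigma'\in\mathcal{L}_1(\sigma)$, so that the infimum is below $W^A(\sigma')$. Thus $\sigma'$ lies in the bracketed intersection defining $\bar{\mathcal{L}}_2(\sigma)$, which requires that $V_2$ there is computed with $\bar{\mathcal{L}}_2$ itself; this is fine by the invariance shown above. Hence $\sigma'\in\bar{\mathcal{L}}_2(\sigma)$.
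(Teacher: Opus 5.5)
Your proof is correct and follows essentially the paper's argument. You get (2)$\Rightarrow$(1) from monotonicity of the infimum in the set, and (1)$\Rightarrow$(2) by applying condition 1 at a lottery whose value is at least $V_1(A,\sigma)$ and reading off membership in the intersection defining $\bar{\mathcal{L}}_2(\sigma)$; the paper does the same by contradiction, plugging in the lottery equivalent of $c^{\sigma'}(A)$. The obstacle you anticipate does not actually arise: $V_1(A,\sigma)$ is an infimum of numbers $\sum_\omega v(g_\omega)\nu_\omega$ that all lie in the closed interval $v(\Delta(X))$ ($X$ finite), so your mixing step is unnecessary, though harmless.
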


The proof of Proposition \ref{proposition_comparative} is contained in \hyperref[Appendix]{Appendix}. The second condition states that, for any information structure, sender 2 is always more concerned about information leakage than sender 1. The first condition states that if sender 1 chooses not to provide information due to concerns about information leakage, then sender 2 does so as well. Proposition \ref{proposition_comparative} establishes that these two statements are equivalent. This result also implies that the maximal $\bar{\mathcal{L}}_i$ is useful for comparing the extent of information avoidance across senders.

\begin{corollary}
\label{corollary_information_avoidance}
    Suppose that $i\in\{1,2\}$ and $(\succsim_i, c)$ is represented by a robust Bayesian persuasion representation $(v,\nu,\bar{\mathcal{L}}_i)$. Moreover, suppose that $\bar{\mathcal{L}}_1(\sigma)\subseteq \bar{\mathcal{L}}_2(\sigma)$ holds for all $\sigma\in\mathcal{E}$. Then, for any $A\in\mathcal{A}$ and $\sigma\in\mathcal{E}$, 
    \begin{equation}
        (A,o)\succsim_1 (A,\sigma) \implies (A,o)\succsim_2 (A,\sigma).
    \end{equation}
\end{corollary}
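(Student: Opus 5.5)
The plan is to derive the corollary directly from Proposition \ref{proposition_comparative}, using a lottery equivalent of $(A,o)$ as a bridge. Fix $A\in\mathcal{A}$ and $\sigma\in\mathcal{E}$, and assume $(A,o)\succsim_1(A,\sigma)$. Both senders share $(v,\nu)$, and $\bar{\mathcal{L}}_i(o)=\{o\}$ (for $o$, the only structure Blackwell more informative than $o$ that is allowed is $o$ itself, by condition (ii) of the representation). Hence $V_1(A,o)=V_2(A,o)=\sum_{s\in o}\min_{f\in c^s(A)}\sum_\omega v(f_\omega)s_\omega\nu_\omega$, which is the receiver's no-information choice evaluated pessimistically. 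This common value does not depend on $i$.

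Next, pick a lottery $l\in\Delta(X)$ with $v(l)=V_1(A,o)=V_2(A,o)$. Such an $l$ exists because $V(A,o)$ is a convex combination of values $v(f_\omega)$, and so it lies in $[\min_{x}v(x),\max_x v(x)]=v(\Delta(X))$ by mixture-linearity of $v$. By lottery invariance, or directly from the representation with $\mathcal{L}(o)=\{o\}$ and a singleton menu, we get $V_i(\{l\},o)=v(l)$. Therefore $(\{l\},o)\sim_i(A,o)$ for $i=1,2$.

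Now chain the relations. Transitivity of $\succsim_1$ together with $(\{l\},o)\sim_1(A,o)\succsim_1(A,\sigma)$ gives $(\{l\},o)\succsim_1(A,\sigma)$. Since $\bar{\mathcal{L}}_1(\sigma)\subseteq\bar{\mathcal{L}}_2(\sigma)$ for all $\sigma$, the direction (2)$\Rightarrow$(1) of Proposition \ref{proposition_comparative} yields $(\{l\},o)\succsim_2(A,\sigma)$. Then $(A,o)\sim_2(\{l\},o)$ gives $(A,o)\succsim_2(A,\sigma)$.

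The only delicate step is confirming that $V_1(A,o)=V_2(A,o)$, meaning that the maximal extension does not enlarge the leakage set at $o$. The maximal construction is defined only for $\sigma\neq o$, so $\bar{\mathcal{L}}_i(o)=\mathcal{L}_i(o)=\{o\}$, and the step goes through. Alternatively, one could avoid the lottery entirely and compare $V_2(A,\sigma)\le V_1(A,\sigma)$ directly, since an infimum over a larger set is smaller. This inequality holds because the infimum over the larger set $\bar{\mathcal{L}}_2(\sigma)$ is weakly smaller, and it gives the result immediately.
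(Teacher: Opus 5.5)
Your proposal is correct and follows the paper's own argument: you take a lottery equivalent $l$ of $(A,o)$ that is common to both senders, then transfer $(\{l\},o)\succsim_1(A,\sigma)$ to sender 2 via the (2)$\Rightarrow$(1) direction of Proposition~\ref{proposition_comparative}. Your explicit check that $V_1(A,o)=V_2(A,o)$, because $\bar{\mathcal{L}}_i(o)=\{o\}$, fills in a step the paper justifies only by the shared $(v,\nu)$, and your direct comparison $V_2(A,\sigma)\le V_1(A,\sigma)$ is an equally valid shortcut.
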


In other words, a sender with a larger maximal $\bar{\mathcal{L}}_i$ is more likely to refrain from providing information. The proof of Corollary \ref{corollary_information_avoidance} is contained in \hyperref[Appendix]{Appendix}.

\section{Discussion}

In this section, we discuss several topics related to the robust Bayesian persuasion representation. First, we examine the receiver's choice correspondence. Next, we clarify the differences between our model and the standard Bayesian persuasion model from an axiomatic perspective.

\subsection{Receiver's choice}

So far, we have proceeded under the assumption that the receiver's choice can be represented as a Bayesian information processor. However, it is not immediately clear which behavioral principles justify such an assumption. \cite{jakobsen2021axiomatic} shows that the following axioms are equivalent to the receiver's choice correspondence admitting a Bayesian representation.

\begin{axiom}{\textbf{Standard receiver preferences.}}
\label{axiom_EU}
    The following properties hold:
    \begin{enumerate}
        \item[(i)] If $f,g\in A\cap B$, $f\in c^s(A)$, and $g\in c^s(B)$, then $f\in c^s(B)$.
        \item[(ii)] For every $s\in S$, there exists $A$ such that $c^s(A)\neq A$.
        \item[(iii)] For all $A,B\in\mathcal{A}$, $\alpha\in[0,1]$, and $s\in S$, $c^s(\alpha A+(1-\alpha)B) \subseteq \alpha c^s(A)+(1-\alpha) c^s(B)$.
        \item[(vi)] For every $s\in S$, $c^s$ is upper hemicontinuous.
        \item[(v)] For every finite set $L\subseteq \Delta(X)$, $h,h'\in\mathcal{F}$, and $s_\omega,s_{\omega'}>0$, $c^s_\omega(L[\omega]h)\subseteq c^{s'}_{\omega'}(L[\omega']h')$. 
    \end{enumerate}
\end{axiom}

\begin{axiom}{\textit{Bayesian consistency.}} 
\label{axiom_BC}
    If $tf+(1-t)h\in c^s(tA+(1-t)h)$, then $sf+(1-s)h'\in c^t(sA+(1-s)h')$.
\end{axiom}

\begin{theorem}{(Theorem 2 in \cite{jakobsen2021axiomatic})}
    $c$ satisfies \textit{standard receiver preferences} and \textit{Bayesian consistency} if and only if it admits a Bayesian representation.
\end{theorem}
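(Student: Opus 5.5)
The statement is Theorem 2 of \cite{jakobsen2021axiomatic}, and the paper imports it, so I only sketch how I would reprove it. \textbf{Necessity} is a direct check. If $c^s$ maximizes $\sum_\omega u(f_\omega)\mu^s_\omega$, then:
\begin{itemize}
\item (i) and (iii) hold because the maximand is linear in $f$;
\item (iv) holds by continuity of the maximand;
\item (ii) holds because $u$ is non-constant, so some binary menu of constant acts has a unique maximizer;
\item (v) holds because the state-wise ranking of lotteries is always given by the same $u$.
\end{itemize}
For Bayesian consistency, note that for a signal vector $t$ the act $tf+(1-t)h$ has value under $\mu^s$ equal to
\[
\sum_\omega \mu^s_\omega t_\omega u(f_\omega)+\text{const},
\]
and $\mu^s_\omega t_\omega\propto \mu_\omega s_\omega t_\omega$. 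The expression $\mu_\omega s_\omega t_\omega$ is symmetric in $(s,t)$, so the two choice problems in the axiom have the same maximizers.

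\textbf{Sufficiency, step 1.} Fix $s\in S$ and define $f\succsim_s g$ iff $f\in c^s(\{f,g\})$. By (i), a Sen/WARP-type condition, $\succsim_s$ is complete and transitive and rationalizes $c^s$ on every finite menu, so $c^s(A)$ is the set of $\succsim_s$-maximal elements of $A$. Condition (iii), applied to binary menus, yields the independence axiom. Upper hemicontinuity (iv) yields mixture continuity, and (ii) yields non-triviality. The Herstein--Milnor / Anscombe--Aumann argument then gives an affine representation
\[
U_s(f)=\sum_\omega u_{s,\omega}(f_\omega).
\]

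\textbf{Sufficiency, step 2.} Next I use (v), which says that the ranking of lotteries on any non-null state $\omega$ of $s$ coincides across states and signals. This collapses the state-dependent utilities to $u_{s,\omega}=p^s_\omega u+\text{const}$ for a single non-constant mixture-linear $u$. It also gives $p^s\in\Delta(\Omega)$ whose support is exactly $\{\omega:s_\omega>0\}$, since states with $s_\omega=0$ must be null.

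\textbf{Sufficiency, step 3: the identification of $p^s$, which I expect to be the main obstacle.} The goal is $p^s_\omega\propto\mu_\omega s_\omega$ for a single full-support $\mu$. Let $\mathbf 1$ denote the uninformative signal and set $\mu:=p^{\mathbf 1}$. By step 2, $\mu$ has full support. By the computation above, choosing from $tA+(1-t)h$ under $s$ is maximizing $\sum_\omega p^s_\omega t_\omega u(f_\omega)$. Choosing from $sA+(1-s)h'$ under $t$ is maximizing $\sum_\omega p^t_\omega s_\omega u(f_\omega)$.

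Bayesian consistency says every maximizer of the first problem is a maximizer of the second. I would apply this to binary menus in which $f$ and $g$ differ only on two states $\omega,\omega'$, with $h,h'$ constant and utilities tuned to make one of the two problems exactly indifferent. This should force the vectors $(p^s_\omega t_\omega)_\omega$ and $(p^t_\omega s_\omega)_\omega$ to be positively proportional. The delicate part is the direction of inclusion: a one-way inclusion of maximizers must still pin down the ratio exactly, which is why the knife-edge indifference menus are needed. Taking $t=\mathbf 1$ then gives $p^s\propto \mu\circ s$, that is, $p^s=\mu^s$. Hence $c^s$ maximizes $\sum_\omega u(f_\omega)\mu^s_\omega$, which is the Bayesian representation $(u,\mu)$.
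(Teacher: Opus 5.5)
The paper gives no proof of this statement. It quotes it from \cite{jakobsen2021axiomatic} and uses it only to justify assuming a Bayesian receiver (e.g.\ in Lemma \ref{lemma_final}). So there is no in-paper argument to compare against. Your sketch is a self-contained reconstruction along the natural route: a per-signal Anscombe--Aumann representation, a common $u$ from (v), and Bayesian consistency against the uninformative signal $\mathbf 1$ to force $p^s=\mu^s$. The necessity part is correct, including the symmetry of $\mu_\omega s_\omega t_\omega$ for Bayesian consistency. The sufficiency outline is sound, but three steps need repair or sharpening.

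\textbf{Independence.} On binary menus, (iii) yields only the strict half of independence. If $f\succ_s g$, then $c^s(\{f,g\})=\{f\}$ forces $\alpha f+(1-\alpha)h\succ_s\alpha g+(1-\alpha)h$. But for $f\sim_s g$ the inclusion is vacuous. The indifference half needs continuity and non-triviality: perturb $g$ toward a strictly better act (or $f$ toward a strictly worse one) and pass to the limit.

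\textbf{Null states.} Condition (v) only concerns states with $s_\omega>0$. It therefore cannot deliver that states with $s_\omega=0$ are null, nor the form $u_{s,\omega}=p^s_\omega u+\text{const}$ there. Without this, your step-3 formula for the value of $tf+(1-t)h$ under $s$ is unjustified. The fix uses Bayesian consistency with $\mathbf 1$ as the conditioning signal and $s$ as the mixing weight: if $sf+(1-s)h\in c^{\mathbf 1}(sA+(1-s)h)$, then $f\in c^s(A)$. Take $f,g$ that differ only off the support of $s$. Then the menu $s\{f,g\}+(1-s)h$ is a singleton, so both $f$ and $g$ lie in $c^s(\{f,g\})$.

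Likewise, that every state in the support of $s$ is non-null (so that $\mu=p^{\mathbf 1}$ has full support) is not part of (v) itself. It follows from (v) together with (ii): a null support state for some $s$ would, via (v), make every state null under $\mathbf 1$, contradicting (ii).

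\textbf{The step you call delicate.} This is simple once stated directly with $t=\mathbf 1$; the general-$t$ proportionality is not needed. Let $\phi(f)=\sum_\omega\mu_\omega s_\omega u(f_\omega)$. The one-way inclusion on binary menus gives two facts:
\begin{itemize}
\item $\phi(f)=\phi(g)$ implies $U_s(f)=U_s(g)$;
\item $\phi(f)>\phi(g)$ implies $U_s(f)\ge U_s(g)$.
\end{itemize}
Since $(u(f_\omega))_{\omega}$ ranges over a box with non-empty interior, two linear functionals with nested level sets must satisfy $U_s=\lambda\phi+\text{const}$. Here $\lambda>0$ by (ii), so $p^s\propto\mu\circ s$.
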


We conclude this subsection with one remark on the receiver's choice correspondence. The assumption that the receiver is a Bayesian information processor is essential for our results. In particular, note that Lemma \ref{lemma_final} in the proof of Theorem \ref{theorem_RBP} relies on the choice correspondence being represented by a Bayesian representation. Therefore, our characterization does not extend to non-Bayesian receivers. Extending the analysis in this direction is left for future research.

\subsection{Bayesian persuasion representation}

The robust Bayesian persuasion representation differs from the standard Bayesian persuasion model in two key respects. First, the sender represented by the robust Bayesian persuasion model is concerned that an information structure more informative than the one she chooses may be implemented. In contrast, the sender in the standard Bayesian persuasion has no such concern and expects that the chosen information structure will be implemented as intended. Second, while the robust Bayesian persuasion representation adopts sender pessimism in resolving tie-breaking by the receiver, the standard Bayesian persuasion model adopts sender optimism. Under these differences, the Bayesian persuasion representation is defined as follows.

\begin{definition}
    A \textbf{Bayesian persuasion representation for $(\succsim,c)$} is a function 
    \begin{equation}
        W(A,\sigma)=\sum_{s\in\sigma,\omega\in\Omega} \left[ \max_{f\in c^s(A)} \sum_{\omega\in\Omega} v(f_\omega) \nu^s_\omega \right]s_\omega \nu_\omega,
    \end{equation}
    where $v:\Delta(X)\rightarrow\mathbb{R}$ is a non-constant and mixture-linear function, and $\nu\in\Delta(\Omega)$ has full support.
\end{definition}

Recall that $c^{\sigma}(A)$ can be interpreted as the sender's ex-ante value of the menu $A$ when the information structure $\sigma$ is implemented with certainty. In this case, it coincides with the value of the pair $(A, \sigma)$ under the Bayesian persuasion representation. Therefore, the value of $(A, \sigma)$ should be equal to that of $(c^{\sigma}(A), \sigma)$. The following axiom formalizes this property.

\begin{axiom}{\textbf{Reduction.}}
    For every menus $A,B$, and information structure $\sigma$, $(c^{\sigma}(A),o)\succsim (c^{\sigma}(B),o)$ if and only if $(A,\sigma)\succsim (B,\sigma)$.
\end{axiom}

The second axiom is the counterpart of \textit{sender pessimism}. Its intuition is identical to that of \textit{sender pessimism}, except that tie-breaking is resolved in a way favorable to the sender.

\begin{axiom}{\textbf{Sender optimism.}}
\label{axiom_SO}
    For every information structure $\sigma$ and menu $A$, (i) $(c^\sigma(A),o)\succsim(\{f\},o)$ for all $f\in c^\sigma(A)$; (ii) $(\{g\},o)\sim(c^\sigma(A),o)$ for some $g\in c^\sigma(A)$.
\end{axiom}

Together with \textit{basic rationality}, \textit{SEU with no information}, and \textit{lottery invariance}, we obtain the following result.

\begin{corollary}
\label{corollary_BP}
    $\succsim$ satisfies \textit{basic rationality}, \textit{SEU with no information}, \textit{lottery invariance}, \textit{reduction}, and \textit{sender optimism} if and only if it is a Bayesian persuasion representation.
\end{corollary}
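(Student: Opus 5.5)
We prove the two directions of Corollary \ref{corollary_BP} separately. Both rely on one object, the commitment act $c^\sigma(A)$, and on the identity already noted after the definition of induced acts: under \textit{SEU with no information}, the value of an induced act $\sum_{s\in\sigma}sf^s$ paired with $o$ equals the $\nu$-average of the posterior expected utilities of the $f^s$.

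\textbf{Sufficiency.} We construct the representation in four steps.

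First, \textit{SEU with no information} gives a non-constant mixture-linear $v$ and a full-support $\nu$, unique in the usual sense. These represent $\succsim$ on $\mathcal{F}\times\{o\}$ via $U(f)=\sum_\omega v(f_\omega)\nu_\omega$. Full support of $\nu$ needs a short argument: we derive it from monotonicity together with non-null states, or we adopt the paper's standing convention if it is part of the Anscombe--Aumann statement being invoked.

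Second, we extend $U$ to menus at $o$. We apply \textit{sender optimism} with $\sigma=o$. Since $o$ has a single column $s=\mathbf{1}$, we have $c^o(A)=c^{\mathbf 1}(A)$. Parts (i) and (ii) then give $(c^o(A),o)\sim(\{g\},o)$, where $g$ is a $U$-maximizer in $c^o(A)$. More generally, for any $\sigma$, $(c^\sigma(A),o)\sim(\{g^*\},o)$ with $g^*\in\arg\max_{g\in c^\sigma(A)}U(g)$. The maximization over $c^\sigma(A)$ factorizes signal by signal, because $U\big(\sum_s sf^s\big)=\sum_s\sum_\omega v(f^s_\omega)s_\omega\nu_\omega$ and the choices $f^s\in c^s(A)$ are independent across $s$. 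Hence
\begin{equation}
U(g^*)=\sum_{s\in\sigma}\max_{f\in c^s(A)}\sum_\omega v(f_\omega)s_\omega\nu_\omega=W(A,\sigma).
\end{equation}

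Third, we transfer this to pairs $(A,\sigma)$ using \textit{reduction}, which says that $(A,\sigma)\succsim(B,\sigma)$ iff $W(A,\sigma)\ge W(B,\sigma)$. So $W(\cdot,\sigma)$ represents $\succsim$ within each fixed $\sigma$. It remains to compare across different information structures. For this we use \textit{lottery invariance} and \textit{basic rationality} to find, for each $(A,\sigma)$, a lottery equivalent within the slice $\sigma$. Because $v$ is affine and non-constant, and $W(A,\sigma)$ lies between the values of the best and worst outcomes in $X$, mixture continuity in the $\sigma$-slice yields a lottery $l$ with $W(\{l\},\sigma)=v(l)=W(A,\sigma)$. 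This step uses that $W(\{l\},\sigma)=v(l)$, since $c^s(\{l\})=\{l\}$. Reduction then gives $(A,\sigma)\sim(\{l\},\sigma)$. Lottery invariance gives $(\{l\},\sigma)\sim(\{l\},o)$, which $U$ evaluates at $v(l)$.

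Fourth, transitivity then shows that $W$ represents $\succsim$ globally.

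We expect the main obstacle to be the existence of the lottery equivalent. Reduction only orders menus within a slice, so we must show that $W(A,\sigma)$ is bracketed by values of lotteries. We will do this by first choosing the best and worst outcome lotteries $\bar l,\underline l$ and noting $v(\underline l)\le W(A,\sigma)\le v(\bar l)$. We then apply the closedness of $\{\alpha\mid (\alpha\bar l+(1-\alpha)\underline l,\sigma)\succsim(A,\sigma)\}$ and of its counterpart to get an indifference point. To turn this ordinal statement into $v(l)=W(A,\sigma)$, we use the fact that reduction aligns the slice order with $W$.

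\textbf{Necessity.} Given $W$, we verify the axioms directly.
\begin{itemize}
\item Basic rationality: completeness and transitivity are immediate. For mixture continuity in a slice, $W(\alpha A+(1-\alpha)B,\sigma)$ is upper semicontinuous in $\alpha$, because $c^s$ is upper hemicontinuous for a Bayesian receiver and a max over an u.h.c. correspondence of a continuous function is u.s.c. Lower closedness needs care. There we exploit the fact that the receiver's argmax on a mixed menu is the mixture of argmaxes, so $W$ is affine on each region where the tie structure is fixed.
\item SEU with no information holds because $W(\{f\},o)=U(f)$.
\item Lottery invariance holds because $W(\{l\},\sigma)=\sum_s v(l)\sum_\omega s_\omega\nu_\omega=v(l)$.
\item Sender optimism and reduction follow from the factorization identity above.
\end{itemize}
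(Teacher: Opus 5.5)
Your sufficiency argument follows the paper's route. Sender optimism gives the max-analogue of Lemma~\ref{Lemma_min}, reduction transfers it to each slice $\sigma$, and the maximum over $c^\sigma(A)$ factorizes into $W(A,\sigma)$. Your lottery-equivalent step is exactly what is needed for comparisons across different $\sigma$, which the paper dispatches with ``by basic rationality'': you find $l$ with $v(l)=W(A,\sigma)$, use reduction to get $(A,\sigma)\sim(\{l\},\sigma)$, then apply lottery invariance. You can skip mixture continuity there. The bounds $v(\underline l)\le W(A,\sigma)\le v(\bar l)$ and mixture-linearity of $v$ give such an $l$ directly, and $c^\sigma(\{l\})=\{l\}$ gives $W(\{l\},\sigma)=v(l)$.

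The gap is in necessity, which the paper does not write out. Sender optimism and reduction do not follow from the factorization identity alone. Under $W$, the pair $(c^\sigma(A),o)$ is a menu evaluated under null information, so $W(c^\sigma(A),o)=\max\{\sum_\omega v(h_\omega)\nu_\omega : h\in c^{\mathbf{1}}(c^\sigma(A))\}$, where $c^{\mathbf{1}}$ is the receiver's choice at her prior $\mu$. A priori this set could be a strict subset of $c^\sigma(A)$ that excludes the sender's favorite element. The missing observation is that the Bayesian receiver is indifferent over $c^\sigma(A)$ at her prior. For $h=\sum_{s\in\sigma}sf^s$ with $f^s\in c^s(A)$,
\[
\sum_{\omega}u(h_\omega)\mu_\omega=\sum_{s\in\sigma}\Big(\sum_{\omega}s_\omega\mu_\omega\Big)\max_{f\in A}\sum_{\omega}u(f_\omega)\mu^s_\omega ,
\]
which does not depend on the selection $(f^s)_{s\in\sigma}$. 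This uses that the receiver forms posteriors from the same $\mu$ she uses ex ante. Hence $c^{\mathbf{1}}(c^\sigma(A))=c^\sigma(A)$, so $W(c^\sigma(A),o)=\max_{h\in c^\sigma(A)}\sum_\omega v(h_\omega)\nu_\omega=W(A,\sigma)$, and both axioms follow.

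Your mixture-continuity sketch also needs tightening. For a Bayesian receiver, $c^s(\alpha A+(1-\alpha)B)=\alpha c^s(A)+(1-\alpha)c^s(B)$ for every $\alpha\in(0,1)$. Therefore $W(\alpha A+(1-\alpha)B,\sigma)=\alpha W(A,\sigma)+(1-\alpha)W(B,\sigma)$ on all of $[0,1]$, and both sets are closed at once. The upper-hemicontinuity argument is unnecessary. Moreover, ``affine on each region'' would not by itself give closedness if there were several regions, since affine pieces can jump at their boundaries.
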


The proof of Corollary \ref{corollary_BP} is contained in \hyperref[Appendix]{Appendix}. We remark that replacing \textit{sender pessimism} with \textit{sender optimism }immediately yields a robust Bayesian persuasion representation in which tie-breaking is resolved optimistically. This observation shows that the two minimization operators appearing in the robust Bayesian persuasion representation are derived by two independent axiomatic conditions.

\subsection{An alternative representation}

So far, we have focused on a sender who is concerned about information leakage. Such a sender anticipates that the information structure implemented may be more informative than the one she originally chooses. In some situations, however, the opposite concern may arise: the implemented information structure may be less informative than the sender's choice. For example, the receiver may fail to notice some information or irrationally discard part of the information (for discussions of such behaviors, see \cite{golman2017information}). A sender with such concerns can be modeled as follows.

\begin{definition}
    A \textbf{garbling-robust Bayesian persuasion representation for $(\succsim,c)$} is a function 
    \begin{equation}
        V^{\dagger}(A,\sigma)=\inf_{\pi\in \mathcal{L}^{\dagger}(\sigma)} \sum_{s\in\pi,\omega\in\Omega} \underline{V} (\nu ^s) s_\omega \nu_\omega,
    \end{equation}
    where
    \begin{equation}
        \underline{V} (\nu ^s)=\min_{f^s \in c^s (A)} \sum_{\omega \in \Omega} v (f_\omega) \nu^s_\omega
    \end{equation}
    is the sender's interim expected utility after the realization of the signal $s$, $v:\Delta(X)\rightarrow\mathbb{R}$ is a non-constant and mixture-linear function, $\nu\in\Delta(\Omega)$ has full support, and $\mathcal{L}^{\dagger}:\mathcal{E}\rightarrow 2^{\mathcal{E}}$ satisfies the following condition: $\mathcal{L}^{\dagger}(\sigma)\subseteq \{\sigma'\in\mathcal{E}~|~\sigma\trianglerighteq\sigma'\}$ for all $\sigma\in\mathcal{E}$.
\end{definition}

Under the garbling-robust Bayesian persuasion representation, the sender anticipates a set $\mathcal{L}^{\dagger}(\sigma)$ consisting of information structures that are Blackwell less informative than $\sigma$ and evaluates menus pessimistically. Unlike the robust Bayesian persuasion representation, the sender does not consider the possibility that a more informative information structure will be implemented. Observe that $\mathcal{L}^{\dagger}(o)=\{o\}$ follows from the definition of the model.

\begin{axiom}{\textbf{Betweenness$^\dagger$.}}
    For all $A\in\mathcal{A}$ and $\sigma\in\mathcal{E}$, $\left(c^{\sigma'}(A),o\right)\succsim(A,\sigma)\succsim\left(c^{\sigma''}(A),o\right)$ for some $\sigma',\sigma''\in\mathcal{E}$ with $\sigma\trianglerighteq\sigma',\sigma''$.
\end{axiom}

\begin{axiom}{\textbf{Menu-independent leakage$^\dagger$}.}
    For $A,B\in\mathcal{A}^*$ and $\sigma\in\mathcal{E}$, if there exists $\sigma^\ast\in\mathcal{E}$ such that $\sigma\trianglerighteq\sigma^\ast$ and $(A,\sigma)\sim(c^{\sigma^\ast}(A),o)$, then $(c^{\sigma^\ast}(B),o)\succsim(B,\sigma)$.
\end{axiom}

The following two axioms are straightforward counterparts of \textit{betweenness} and \textit{menu-independence leakage}. The original axioms are stated in terms of information structures that are more informative than $\sigma$, whereas the modified axioms replace them with information structures that are less informative than $\sigma$.

The following corollary is obtained by slightly modifying the proof of Theorem \ref{theorem_RBP}.

\begin{corollary}
\label{corollary_garbling}
    If $\succsim$ satisfies \textit{basic rationality}, \textit{SEU with no information}, \textit{lottery invariance}, \textit{sender pessimism}, \textit{betweenness$^\dagger$}, and \textit{menu-independent leakage$^\dagger$}, then it is represented by a garbling-robust Bayesian persuasion representation.
\end{corollary}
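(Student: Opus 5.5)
The plan is to follow the proof of Theorem \ref{theorem_RBP} step by step, changing only the direction of the Blackwell relation wherever it appears. Everything that does not involve leakage carries over unchanged.

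First, \textit{SEU with no information} gives a mixture-linear, non-constant $v$ and a full-support $\nu$ that represent $\succsim$ on $\mathcal{F}\times\{o\}$. Using \textit{sender pessimism}, we then extend this to commitment menus of the form $(c^\sigma(A),o)$. Since $(c^\sigma(A),o)\sim(\{g\},o)$ for a minimizing $g$, we have
\begin{equation}
V\left(c^{\sigma}(A),o\right)=\sum_{s\in\sigma}\min_{f^s\in c^s(A)}\sum_{\omega\in\Omega}v(f^s_\omega)s_\omega\nu_\omega .
\end{equation}
Next, \textit{basic rationality} (mixture continuity), \textit{lottery invariance} and \textit{betweenness$^\dagger$} let us assign to each pair $(A,\sigma)$ a certainty-equivalent lottery. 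Betweenness$^\dagger$ brackets $(A,\sigma)$ between two commitment values $(c^{\sigma'}(A),o)$ and $(c^{\sigma''}(A),o)$ with $\sigma\trianglerighteq\sigma',\sigma''$. Mixing the corresponding induced acts with lotteries and applying continuity then yields a lottery $l$ with $(A,\sigma)\sim(\{l\},\sigma)\sim(\{l\},o)$. We define $V^\dagger(A,\sigma):=v(l)$, which represents $\succsim$ by transitivity.

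The core step identifies $\mathcal{L}^\dagger$. For $\sigma\neq o$, set
\begin{equation}
\mathcal{L}^\dagger(\sigma):=\left\{\sigma^*~\middle|~\sigma\trianglerighteq\sigma^*,~\exists A\in\mathcal{A}^*:~(A,\sigma)\sim\left(c^{\sigma^*}(A),o\right)\right\},
\end{equation}
possibly enlarged by every garbling $\sigma^*$ of $\sigma$ with $(c^{\sigma^*}(B),o)\succsim(B,\sigma)$ for all $B\in\mathcal{A}^*$. For $\sigma=o$, set $\mathcal{L}^\dagger(o)=\{o\}$; this is automatic, since $o$ is its own only garbling.

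We must show $V^\dagger(A,\sigma)=\inf_{\pi\in\mathcal{L}^\dagger(\sigma)}V(c^\pi(A),o)$ for every menu $A$. The inequality $\le$ comes from \textit{menu-independent leakage$^\dagger$}. Any $\sigma^*$ in the set satisfies $(c^{\sigma^*}(B),o)\succsim(B,\sigma)$ for all essential $B$, and the enlargement elements satisfy this by construction. For the inequality $\ge$ on essential menus, we need the infimum to be attained. Here we use betweenness$^\dagger$ together with the continuity argument above to produce, for each essential $A$, a garbling $\sigma^*$ of $\sigma$ with exact indifference $(A,\sigma)\sim(c^{\sigma^*}(A),o)$. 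This $\sigma^*$ belongs to $\mathcal{L}^\dagger(\sigma)$ by definition, so the infimum is attained there.

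The main obstacle is this exact-indifference step. The set of garblings of $\sigma$ must be rich enough that $\pi\mapsto V(c^\pi(A),o)$ sweeps out the whole interval between the bracketing values. I would try the convex combinations $\alpha\sigma'\cup(1-\alpha)\sigma''$. These are again garblings of $\sigma$, because $\sigma'=\sigma B'$ and $\sigma''=\sigma B''$ give $\alpha\sigma'\cup(1-\alpha)\sigma''=\sigma[\alpha B'~|~(1-\alpha)B'']$. With the Bayesian receiver, Lemma \ref{lemma_final} should give $c^{\alpha\sigma'\cup(1-\alpha)\sigma''}(A)=\alpha c^{\sigma'}(A)+(1-\alpha)c^{\sigma''}(A)$. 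The value is then linear in $\alpha$, and the intermediate value theorem delivers the exact indifference.

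Non-essential menus must be handled separately. For singleton and lottery menus, the information structure is irrelevant: pessimistic tie-breaking and lottery invariance make both sides equal to the no-information value for every $\pi$. Hence the formula holds trivially for every choice of $\mathcal{L}^\dagger$.
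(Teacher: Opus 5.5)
Your proposal is correct and follows essentially the paper's route: commitment values via sender pessimism, a certainty equivalent via betweenness$^\dagger$ and mixture continuity, exact indifference through $\alpha\sigma'\cup(1-\alpha)\sigma''$, and $\mathcal{L}^\dagger$ built from exact-indifference garblings controlled by menu-independent leakage$^\dagger$; collecting all such garblings, rather than one selected $\phi(A,\sigma)$ per essential menu plus $\sigma$, is only a cosmetic and slightly cleaner variation. Two minor slips do not affect the conclusion: $o$ is not literally its own only garbling, since every information structure with identical rows is one, though all of these induce the same commitment value as $o$; and for non-essential menus such as non-constant singletons it is betweenness$^\dagger$, not lottery invariance, that pins down $V^\dagger(A,\sigma)$.
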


The proof of Corollary \ref{corollary_garbling} is contained in \hyperref[Appendix]{Appendix}. 

\section*{Appendix}

\subsection*{Explicit formulation of \textit{SEU with no information}}

\textit{SEU with no information} consists of the following postulates. For all $f,g,h\in\mathcal{F}$, $\succsim$ on $\mathcal{F}\times\{o\}$ satisfies (i) \textbf{Weak order}: $\succsim$ on $\mathcal{F}\times\{o\}$ is complete and transitive, (ii) \textbf{Non-triviality}: there exist $x,y\in\Delta(X)$ such that $(x,o)\succ (y,o)$, (iii) \textbf{Independence}: for all $\alpha\in(0,1)$,  $(f,o)\succsim (g,o)$ if and only if $(\alpha f+(1-\alpha)h,o)\succsim (\alpha g+(1-\alpha)h,o)$. (iv) \textbf{Monotonicity}: If $(f(\omega),o)\succsim (g(\omega),o)$ for all $\omega\in\Omega$, then $(f,o)\succsim (g,o)$. In addition, if $(f(\omega),o)\succ (g(\omega),o)$ for some $\omega\in\Omega$, $(f,o)\succ (g,o)$. (v) \textbf{Mixture continuity}. the sets $\{\alpha\in[0,1]~|~(f,o)\succsim (\alpha g+(1-\alpha)h,o)\}$ and $\{\alpha\in[0,1]~|~(\alpha f+(1-\alpha)g, o) \succsim (h,o)\}$ are closed.

\subsection*{Proof of Theorem \ref{theorem_RBP}}\label{Appendix}

The proof consists of several lemmas.

\begin{lemma}
\label{Lemma_vNM}
    There exists a non-constant and mixture linear function $v:\Delta(X)\rightarrow\mathbb{R}$ such that for all $\sigma\in\mathcal{E}$ and $l,l'\in\Delta(X)$, 
    \begin{equation}
        (\{l\},\sigma)\succsim(\{l'\},\sigma)\iff v(l)\geq v(l').
    \end{equation}
\end{lemma}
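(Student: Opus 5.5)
The plan is to obtain $v$ from \textit{SEU with no information} and then transport it to every information structure via \textit{lottery invariance} and \textit{transitivity}.

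First, restrict attention to constant acts paired with $o$. By \textit{SEU with no information}, $\succsim$ on $\mathcal{F}\times\{o\}$ satisfies the \cite{anscombe1963definition} axioms. The Anscombe--Aumann representation theorem then yields a mixture-linear $v:\Delta(X)\rightarrow\mathbb{R}$ and a probability $\nu\in\Delta(\Omega)$ such that $(f,o)\succsim(g,o)$ if and only if $\sum_\omega v(f_\omega)\nu_\omega\geq\sum_\omega v(g_\omega)\nu_\omega$. Restricted to constant acts, i.e.\ lotteries, this reads $(\{l\},o)\succsim(\{l'\},o)\iff v(l)\geq v(l')$. Alternatively, one can apply the mixture space (von Neumann--Morgenstern) theorem directly to the restriction to $\Delta(X)\times\{o\}$, since weak order, independence and mixture continuity are inherited by constant acts. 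Non-constancy of $v$ follows from the non-triviality postulate: there are $x,y$ with $(x,o)\succ(y,o)$, so $v(x)>v(y)$.

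Second, fix an arbitrary $\sigma\in\mathcal{E}$ and lotteries $l,l'$. \textit{Lottery invariance} gives $(\{l\},\sigma)\sim(\{l\},o)$ and $(\{l'\},\sigma)\sim(\{l'\},o)$. Since $\succsim$ is complete and transitive by \textit{basic rationality}, $(\{l\},\sigma)\succsim(\{l'\},\sigma)$ holds if and only if $(\{l\},o)\succsim(\{l'\},o)$, which by the first step holds if and only if $v(l)\geq v(l')$. This is exactly the claim.

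There is no substantial obstacle. The only care needed is to check that the Anscombe--Aumann postulates, as listed explicitly in the Appendix, restrict cleanly to lotteries so that the vNM theorem applies. In particular, mixture continuity in the form of closed sets is the Herstein--Milnor continuity condition, and independence is stated with the ``if and only if'' form. Both restrict directly.
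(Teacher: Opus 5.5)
Your proposal is correct and follows essentially the same route as the paper: obtain $v$ from \textit{SEU with no information} on lotteries paired with $o$, then extend to every $\sigma$ by chaining the indifferences from \textit{lottery invariance} through transitivity. You are a bit more explicit than the paper in two places: you derive non-constancy of $v$ from the non-triviality postulate, and you check that the listed postulates restrict to constant acts.
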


\begin{proof}
    By \textit{SEU with no information}, there exists a non-constant and mixture linear function $v:\Delta(X)\rightarrow\mathbb{R}$ such that for all $l,l'\in\Delta(X)$, $(\{l\},o)\succsim(\{l'\},o)$ if and only if $v(l)\geq v(l')$. \textit{Lottery invariance} implies that for every $\sigma\in\mathcal{E}$, $(\{l\},\sigma)\succsim(\{l'\},\sigma)$ if and only if $(\{l\},o)\succsim(\{l'\},o)$ for all $l,l'\in\Delta(X)$. Indeed, by \textit{lottery invariance}, $(\{l\},o)\sim(\{l\},\sigma)\succsim(\{l'\},\sigma)\sim(\{l'\},o)$ holds, which implies $(\{l\},o)\succsim(\{l'\},o)$. By the same argument, $(\{l\},o)\succsim(\{l'\},o)$ implies $(\{l\},\sigma)\succsim(\{l'\},\sigma)$. Thus, for all $\sigma\in\mathcal{E}$ and $l,l'\in\Delta(X)$, $(\{l\},\sigma)\succsim(\{l'\},\sigma)\iff v(l)\geq v(l')$.
\end{proof}

\begin{lemma}{(\cite{anscombe1963definition})}
\label{Lemma_SEU}
    There exists $\nu\in\Delta(\Omega)$ such that for all acts $f,g\in\mathcal{F}$, 
    \begin{equation}
        (\{f\},o)\succsim(\{g\},o)\iff \sum_{\omega\in\Omega} v(f_\omega)\nu(\omega)\geq \sum_{\omega\in\Omega} v(g_\omega)\nu(\omega).
    \end{equation}
\end{lemma}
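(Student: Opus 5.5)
This is the Anscombe--Aumann theorem applied to the restriction of $\succsim$ to $\mathcal{F}\times\{o\}$. The plan is to invoke it directly, pin down its utility index to be the $v$ from Lemma \ref{Lemma_vNM}, and then read off $\nu$. By \textit{SEU with no information}, the restriction satisfies weak order, non-triviality, mixture continuity, monotonicity, and independence, as spelled out in the explicit formulation in the Appendix. The Anscombe--Aumann representation theorem then yields a non-constant mixture-linear $w:\Delta(X)\rightarrow\mathbb{R}$ and a probability $\nu\in\Delta(\Omega)$ such that, for all acts $f,g$,
\begin{equation}
(\{f\},o)\succsim(\{g\},o)\iff\sum_{\omega\in\Omega}w(f_\omega)\nu_\omega\geq\sum_{\omega\in\Omega}w(g_\omega)\nu_\omega .
\end{equation}
If one wants to be explicit, I would recall its standard proof. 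Independence and continuity on constant acts give the vNM index. Monotonicity lets each act be replaced, state by state, by its certainty equivalents, reducing the problem to a preference on $v(\Delta(X))^\Omega$. That preference is monotone and affine, so it is represented by a positive linear functional, and normalizing gives $\nu$.

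Next, I replace $w$ by the $v$ of Lemma \ref{Lemma_vNM}. Restricted to constant acts, both $w$ and $v$ represent the same preference $(\{l\},o)\succsim(\{l'\},o)$ and are mixture-linear. By vNM uniqueness, $w=av+b$ for some $a>0$ and $b\in\mathbb{R}$. Substituting leaves the inequality above unchanged, since $\nu$ sums to one. So the representation holds with $v$ itself.

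The only point that needs real care is full support of $\nu$, which the representation definition requires. The lemma as stated does not demand it, so the statement follows without it. Still, I would note how to obtain it if wanted: the strict part of \textit{monotonicity} (a strict improvement in a single state yields strict preference) forces $\nu_\omega>0$. To see this, take $x,y$ with $v(x)>v(y)$, which exist by non-triviality. Compare the act equal to $x$ at $\omega$ and $y$ elsewhere with the constant act $y$. Strict preference requires $\nu_\omega(v(x)-v(y))>0$.
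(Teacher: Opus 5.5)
Your proof is correct and takes the same approach as the paper, which simply invokes the Anscombe--Aumann theorem and omits the details. Both of your additions are sound and make explicit what the paper leaves implicit: aligning the Anscombe--Aumann utility with the $v$ of Lemma~\ref{Lemma_vNM} via vNM uniqueness, and deriving full support of $\nu$ from the strict part of \textit{monotonicity}. Full support is required by the representation definition, but the paper never checks it.
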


\begin{proof}
    This result directly comes from the result of \cite{anscombe1963definition}, thus we omit the proof.
\end{proof}

\begin{lemma}
\label{Lemma_min}
    For all $\sigma\in\mathcal{E}$ and menus $A,B\in\mathcal{A}$,
    \begin{equation}
        (c^\sigma(A),o)\succsim(c^\sigma(B),o) \iff \min_{f \in c^{\sigma} (A)} \left[ \sum_{\omega \in \Omega} v (f_\omega) \nu_\omega \right] \geq \min_{g \in c^{\sigma} (B)} \left[ \sum_{\omega \in \Omega} v (g_\omega) \nu_\omega \right]. 
    \end{equation}
\end{lemma}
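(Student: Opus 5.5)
The plan is to use \textit{sender pessimism} to reduce the evaluation of the menu $(c^\sigma(A),o)$ to that of a single act, and then apply Lemma \ref{Lemma_SEU}. First, $c^\sigma(A)$ is a nonempty finite set of acts, so it is a legitimate menu. Indeed, $A$ is finite, $c^s(A)$ is nonempty for each $s$ under the Bayesian representation, and $\sigma$ has finitely many columns. Hence the minimum $\min_{f\in c^\sigma(A)}\sum_\omega v(f_\omega)\nu_\omega$ is attained, and the same holds for $B$.

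Second, I identify the value of $(c^\sigma(A),o)$ with this minimum. By part (ii) of \textit{sender pessimism}, there is $g_A\in c^\sigma(A)$ with $(\{g_A\},o)\sim(c^\sigma(A),o)$. By part (i), $(\{f\},o)\succsim(c^\sigma(A),o)\sim(\{g_A\},o)$ for every $f\in c^\sigma(A)$. Transitivity from \textit{basic rationality} and Lemma \ref{Lemma_SEU} then give $\sum_\omega v(f_\omega)\nu_\omega\ge\sum_\omega v(g_{A,\omega})\nu_\omega$ for all $f\in c^\sigma(A)$. Therefore $g_A$ attains the minimum, and
\begin{equation}
\sum_{\omega\in\Omega} v(g_{A,\omega})\nu_\omega=\min_{f\in c^\sigma(A)}\sum_{\omega\in\Omega} v(f_\omega)\nu_\omega .
\end{equation}
The same construction yields $g_B\in c^\sigma(B)$ with the analogous property.

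Third, I chain the comparisons. Since $(c^\sigma(A),o)\sim(\{g_A\},o)$ and $(c^\sigma(B),o)\sim(\{g_B\},o)$, completeness and transitivity give
\begin{equation}
(c^\sigma(A),o)\succsim(c^\sigma(B),o)\iff(\{g_A\},o)\succsim(\{g_B\},o).
\end{equation}
Lemma \ref{Lemma_SEU} turns the right-hand side into $\sum_\omega v(g_{A,\omega})\nu_\omega\ge\sum_\omega v(g_{B,\omega})\nu_\omega$, and the previous step identifies these two numbers with the two minima. This is exactly the claimed equivalence.

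The only delicate point is the second step: I must check that the indifferent element $g_A$ supplied by part (ii) actually minimizes the SEU value over $c^\sigma(A)$, rather than just being some element of the set. This follows from part (i) combined with transitivity, as argued above. Apart from that, the argument consists of routine invocations of the earlier lemmas.
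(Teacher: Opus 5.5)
Your proof is correct and follows essentially the paper's argument. Both use the two parts of \textit{sender pessimism} to show that $(c^\sigma(A),o)$ is indifferent to an SEU-minimizing element of $c^\sigma(A)$, and then chain the comparisons via transitivity and Lemma~\ref{Lemma_SEU}; the only cosmetic difference is that the paper starts from the minimizer $f^*$ and derives the indifference by contradiction, whereas you start from the indifferent element given by part (ii) and use part (i) to show directly that it is a minimizer.
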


\begin{proof}
    Take $\sigma\in\mathcal{E}$ and $A\in\mathcal{A}$ arbitrarily. Recall that $c^\sigma(A)$ is defined as 
    \begin{equation}
        c^{\sigma}(A)=\left\{\sum_{s\in\sigma}sf^s~|~ f^s\in c^s(A)\right\}.
    \end{equation}
    Since $A$ is finite and $\sigma$ consists of finitely many signals (columns), $c^\sigma(A)$ is also finite. By Lemma \ref{Lemma_SEU}, there exists the worst act $f^*\in c^\sigma(A)$.\footnote{Note that such $f^*$ is not necessarily unique.} That is, 
    \begin{equation}
        \sum_{\omega\in\Omega}v(f_\omega)\nu(\omega)\geq \sum_{\omega\in\Omega}v(f^*_\omega)\nu(\omega)
    \end{equation}
    for all $f\in c^\sigma(A)$. We show that $(\{f^*\},o)\sim(c^\sigma(A),o)$ holds. Suppose not. Then, either $(\{f^*\},o)\succ (c^\sigma(A),o)$ or $(c^\sigma(A),o)\succ (\{f^*\},o)$ holds. If the former holds, then $(\{f\},o)\succ (c^\sigma(A),o)$ for all $f\in c^\sigma(A)$, contradicting \textit{sender pessimism}. If the latter holds, then there exists $f\in c^\sigma(A)$ such that $(c^\sigma(A),o)\succ (\{f\},o)$ , contradicting \textit{sender pessimism} again. Thus, $(\{f^*\},o)\sim(c^\sigma(A),o)$ holds. Then, for all $A,B\in\mathcal{A}$, 
    \begin{align}
        &~~~~~~~~~(c^\sigma(A),o)\succsim(c^\sigma(B),o) \\ 
        &\iff (\{f^*\},o)\succsim(\{g^*\},o) \text{, where $(\{f^*\},o)\sim(c^\sigma(A),o)$ and  $(\{g^*\},o)\sim(c^\sigma(B),o)$} \\
        &\iff \min_{f \in c^{\sigma} (A)} \left[ \sum_{\omega \in \Omega} v (f_\omega) \nu_\omega \right] \geq \min_{g \in c^{\sigma} (B)} \left[ \sum_{\omega \in \Omega} v (g_\omega) \nu_\omega \right]. 
    \end{align}
    The first equivalence comes from \textit{basic rationality} and the second one comes from the construction of $f^*$ and $g^*$.
\end{proof}

\begin{lemma}
    For all $\sigma\in\mathcal{E}$ and $A\in\mathcal{A}$, there exists $l_A$ such that $(A,\sigma)\sim(\{l_A\},o)$.
\end{lemma}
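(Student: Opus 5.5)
The plan is to sandwich $(A,\sigma)$ between two lotteries and then use mixture continuity on lotteries to find an exact certainty equivalent.

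First, I would invoke \textit{betweenness}. It gives $\sigma',\sigma''\trianglerighteq\sigma$ with $(c^{\sigma'}(A),o)\succsim(A,\sigma)\succsim(c^{\sigma''}(A),o)$. By the argument in the proof of Lemma \ref{Lemma_min}, which uses \textit{sender pessimism}, each of $(c^{\sigma'}(A),o)$ and $(c^{\sigma''}(A),o)$ is indifferent to a singleton $(\{f^*\},o)$, where $f^*$ is a worst act in the respective finite set. Lemma \ref{Lemma_SEU} then lets me replace each such act by a lottery. Since $v$ is mixture linear and non-constant on $\Delta(X)$, and $\sum_\omega v(f_\omega)\nu_\omega$ lies between $\min_\omega v(f_\omega)$ and $\max_\omega v(f_\omega)$, there is a lottery $p$ (a mixture of $f_\omega$'s) with $v(p)=\sum_\omega v(f_\omega)\nu_\omega$, so $(\{f\},o)\sim(\{p\},o)$. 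This yields lotteries $\bar l,\underline l$ with $(\{\bar l\},o)\succsim(A,\sigma)\succsim(\{\underline l\},o)$.

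Second, I would apply continuity along the segment between them. Consider $\alpha\mapsto(\{\alpha\bar l+(1-\alpha)\underline l\},o)$. Define $I_1=\{\alpha:(\{\alpha\bar l+(1-\alpha)\underline l\},o)\succsim(A,\sigma)\}$ and $I_2=\{\alpha:(A,\sigma)\succsim(\{\alpha\bar l+(1-\alpha)\underline l\},o)\}$. They are nonempty ($1\in I_1$, $0\in I_2$) and by completeness their union is $[0,1]$. Connectedness of $[0,1]$ then gives a common point $\alpha^*$, provided both sets are closed. For that $\alpha^*$, $l_A=\alpha^*\bar l+(1-\alpha^*)\underline l$ works.

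The main obstacle is closedness. \textit{Basic rationality} only gives mixture continuity of menus with the information structure held fixed, whereas here the lottery sits at $o$ and $A$ sits at $\sigma$. I would handle this with \textit{lottery invariance}: $(\{l\},o)\sim(\{l\},\sigma)$ for every $l$, so by transitivity $I_1=\{\alpha:(\alpha\{\bar l\}+(1-\alpha)\{\underline l\},\sigma)\succsim(A,\sigma)\}$, and similarly for $I_2$. Both are now comparisons at the fixed structure $\sigma$, with mixed menu $\alpha\{\bar l\}+(1-\alpha)\{\underline l\}=\{\alpha\bar l+(1-\alpha)\underline l\}$. The closedness clauses of \textit{basic rationality}, with $C=A$ and with the roles of $A$, $B$, $C$ in its two displayed sets, give exactly that $I_1$ and $I_2$ are closed, which completes the argument.
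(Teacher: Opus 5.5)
Your proposal is correct and follows essentially the same route as the paper. Betweenness together with the sender-pessimism argument of Lemma \ref{Lemma_min} sandwiches $(A,\sigma)$ between two lotteries, lottery invariance moves the comparison to the fixed structure $\sigma$, and the closedness clauses of basic rationality plus connectedness of $[0,1]$ give the certainty equivalent. The only difference is cosmetic: you apply lottery invariance inside the definitions of the two closed sets rather than to the endpoints first, and your version correctly concludes $(\{l_A\},o)\sim(A,\sigma)$, which the paper's final line misstates with $\sigma$ in place of $o$.
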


\begin{proof}
    Take $\sigma\in\mathcal{E}$ and $A\in\mathcal{A}$ arbitrarily. By \textit{betweenness}, there exist $\sigma',\sigma''\in\mathcal{E}$ such that $\sigma',\sigma''\trianglerighteq\sigma$ and 
    \begin{equation}
        \left(c^{\sigma'}(A),o\right)\succsim(A,\sigma)\succsim\left(c^{\sigma''}(A),o\right).
    \end{equation}
    By Lemma \ref{Lemma_min}, there exist lotteries $l',l''\in\Delta(X)$ such that 
    \begin{equation}
        \min_{f \in c^{\sigma'} (A)} \left[ \sum_{\omega \in \Omega} v (f_\omega) \nu_\omega \right]=v(l') \text{ and } \min_{f \in c^{\sigma''} (A)} \left[ \sum_{\omega \in \Omega} v (f_\omega) \nu_\omega \right]=v(l'').
    \end{equation}
    Such lotteries exist since $v$ is a mixture-linear function. By \textit{basic rationality}, 
    \begin{equation}
        (\{l'\},o)\succsim (A,\sigma)\succsim (\{l''\},o).
    \end{equation}
    Moreover, by \textit{lottery invariance} and \textit{basic rationality}, 
    \begin{equation}
        (\{l'\},\sigma)\succsim (A,\sigma)\succsim (\{l''\},\sigma).
    \end{equation}
    Then, \textit{basic rationality} implies that the sets 
    \begin{equation}
       I=\{\alpha\in[0,1]~|~(\{\alpha l'+(1-\alpha)l''\},\sigma)\succsim(A,\sigma)\}
    \end{equation}
    and 
    \begin{equation}
       J=\{\alpha\in[0,1]~|~(A,\sigma)\succsim(\{\alpha l'+(1-\alpha)l''\},\sigma)\}
    \end{equation}
    are closed.
    By \textit{basic rationality}, $I\cup J=[0,1]$. Since $[0,1]$ is connected, there exists $\alpha\in I\cap J$ such that $(\{\alpha l'+(1-\alpha)l''\},\sigma)\sim(A,\sigma)$. By \textit{lottery invariance}, $(\{\alpha l'+(1-\alpha)l''\},\sigma)\sim(A,\sigma)$ holds and the proof ends.
\end{proof}

\begin{lemma}
\label{lemma_final}
    For every $\sigma\in\mathcal{E}$, there exists $\mathcal{L}(\sigma)\subset \{\sigma'\in\mathcal{E}~|~\sigma'\trianglerighteq\sigma\}$ such that 
    \begin{equation}
        v(l_A)=\inf_{\pi\in \mathcal{L}(\sigma)} \sum_{s \in \pi} \left[ \min_{f^s \in c^s (A)} \sum_{\omega \in \Omega} v (f_\omega) s_\omega \nu_\omega \right].
    \end{equation}
\end{lemma}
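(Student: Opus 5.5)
The plan is to build $\mathcal{L}(\sigma)$ from the leakages the sender actually uses on essential menus, and then to check the infimum formula separately for essential and non-essential menus. To keep notation short, for $A\in\mathcal{A}$ and $\pi\in\mathcal{E}$ write
\begin{equation}
W(A,\pi):=\sum_{s\in\pi}\left[\min_{f^s\in c^s(A)}\sum_{\omega\in\Omega}v(f^s_\omega)s_\omega\nu_\omega\right].
\end{equation}

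The first step relates $W$ to the commitment $(c^\pi(A),o)$. By Lemma \ref{Lemma_SEU}, the SEU value of the induced act $\sum_{s\in\pi}sf^s$ equals $\sum_{s}\sum_\omega v(f^s_\omega)s_\omega\nu_\omega$. This expression is additively separable across signals. Moreover, $c^\pi(A)$ is obtained by choosing $f^s\in c^s(A)$ independently for each $s$. Hence the minimum over $c^\pi(A)$ splits into a sum of per-signal minima. Combining this with Lemma \ref{Lemma_min} gives
\begin{equation}
(c^\pi(A),o)\sim(\{l\},o)\iff v(l)=W(A,\pi).
\end{equation}
By the previous lemma and \textit{betweenness}, for every $A$ there are $\sigma',\sigma''\trianglerighteq\sigma$ with $W(A,\sigma')\ge v(l_A)\ge W(A,\sigma'')$.

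The second step, which I expect to be the main obstacle, upgrades this bracketing to an exact match. I want some $\sigma^*\trianglerighteq\sigma$ with $W(A,\sigma^*)=v(l_A)$. For $\alpha\in[0,1]$, consider $\pi_\alpha:=\alpha\sigma'\cup(1-\alpha)\sigma''$.
\begin{itemize}
\item $\pi_\alpha$ is Blackwell more informative than $\sigma$. Stacking the two garbling matrices that take $\sigma'$ and $\sigma''$ to $\sigma$ gives a Markov matrix $B$ with $\pi_\alpha B=\sigma$.
\item The map $\alpha\mapsto W(A,\pi_\alpha)$ is affine. Scaling a signal by $\alpha$ leaves the Bayesian posterior $\mu^{\alpha s}=\mu^s$ unchanged. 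This is where the Bayesian representation of $c$ is used: it gives $c^{\alpha s}(A)=c^s(A)$. So each summand of $W$ scales linearly, and $W(A,\pi_\alpha)=\alpha W(A,\sigma')+(1-\alpha)W(A,\sigma'')$.
\end{itemize}
The intermediate value theorem then yields $\alpha$ with $W(A,\pi_\alpha)=v(l_A)$, that is, $(A,\sigma)\sim(c^{\pi_\alpha}(A),o)$. Set $\sigma^*_A:=\pi_\alpha$.

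The third step defines the set. Let
\begin{equation}
\mathcal{L}(\sigma):=\{\sigma^*\trianglerighteq\sigma \mid (A,\sigma)\sim(c^{\sigma^*}(A),o)\text{ for some }A\in\mathcal{A}^*\}.
\end{equation}
This set is nonempty, because essential menus exist (take a non-constant act together with any other act, which is possible by non-triviality). We then check the formula by cases.
\begin{itemize}
\item Essential $A$. The structure $\sigma^*_A$ lies in $\mathcal{L}(\sigma)$, so the infimum is at most $v(l_A)$. Conversely, take any $\sigma^*\in\mathcal{L}(\sigma)$ and its witnessing essential menu $B$. \textit{Menu-independent leakage}, applied with the roles of the two menus swapped, gives $(c^{\sigma^*}(A),o)\succsim(A,\sigma)$, i.e. $W(A,\sigma^*)\ge v(l_A)$. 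Hence the infimum is attained and equals $v(l_A)$.
\item Singleton menus $\{f\}$. Here $c^\pi(\{f\})=\{f\}$, so $W(\{f\},\pi)$ equals the SEU value of $f$ for every $\pi$.
\item Menus of lotteries. Posteriors are irrelevant, so $c^s(A)$ is the set of $u$-maximizers for every $s$. Hence $W(A,\pi)$ is the same constant for every $\pi$.
\end{itemize}
In both non-essential cases the two \textit{betweenness} bounds coincide, which forces $v(l_A)$ to equal that constant. The infimum over any nonempty $\mathcal{L}(\sigma)$ returns the same constant, so the formula holds for these menus too.

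The requirement $\mathcal{L}(o)=\{o\}$ is not part of this lemma. It would be handled afterwards, when the representation is assembled, for instance by redefining $\mathcal{L}(o)=\{o\}$ and checking that $(A,o)\sim(c^o(A),o)$.
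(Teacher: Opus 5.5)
Your proof is correct and follows essentially the paper's argument: the \emph{betweenness} bracket, then the affine path $\alpha\sigma'\cup(1-\alpha)\sigma''$ with the intermediate value theorem to get an exact witness, then \emph{menu-independent leakage} for the reverse inequality. You also make explicit the fact $c^{\alpha s}(A)=c^{s}(A)$ from the Bayesian representation, which the paper uses silently.

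Defining $\mathcal{L}(\sigma)$ as the set of all essential-menu witnesses, and handling singleton and lottery menus directly, is if anything cleaner than the paper's selection $\phi(A,\sigma)$ plus $\{\sigma\}$. That construction tacitly needs an essential menu witnessing $\sigma$ itself, and it covers non-essential menus only by a ``without loss of generality.''

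There are two small caveats.
\begin{enumerate}
\item $\mathcal{A}^*\neq\emptyset$ requires $|\Omega|\ge 2$, not just non-triviality. Otherwise your $\mathcal{L}(\sigma)$ is empty.
\item Your closing remark about $\mathcal{L}(o)=\{o\}$ is not routine. Take the essential menu $\{f,q\}$ with $q$ constant, $f$ non-constant, and $u(q)<u(f_\omega)$ for all $\omega$. For every $\pi$, \emph{betweenness} gives $(\{f,q\},\sigma)\sim(\{f\},o)=(c^{\pi}(\{f,q\}),o)$. So your $\mathcal{L}(\sigma)$ is in fact all of $\{\sigma'\mid\sigma'\trianglerighteq\sigma\}$. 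At $\sigma=o$, \emph{menu-independent leakage} then forces $(c^{\pi}(A),o)\succsim(A,o)$ for every $\pi$ and every essential $A$. Hence $(A,o)\sim(c^{o}(A),o)$ would require that no information ever hurts the sender on an essential menu.
\end{enumerate}
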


\begin{proof}
    \textit{Betweenness} implies that for all $A\in\mathcal{A}$, there exists $\sigma^*$ such that $\sigma^*\trianglerighteq\sigma$ and 
    \begin{equation}
        \min_{f \in c^{\sigma^*} (A)} \left[ \sum_{\omega \in \Omega} v (f_\omega) \nu_\omega \right] \geq v(l_A).
    \end{equation}
    Moreover, \textit{betweenness} implies that
    \begin{equation}
        v(l_A)\geq \min_{\sigma'\trianglerighteq\sigma} \sum_{s \in \sigma'} \left[ \min_{f^s \in c^s (A)} \sum_{\omega \in \Omega} v (f_\omega) s_\omega \nu_\omega \right]
    \end{equation}
    holds.
    Then, there exists $\sigma_*\in\mathcal{E}$ such that $\sigma_*\trianglerighteq\sigma$ and 
    \begin{equation}
        v(l_A)\geq \min_{f \in c^{{\sigma_*}} (A)} \left[ \sum_{\omega \in \Omega} v (f_\omega) \nu_\omega \right].
    \end{equation}
    Notice that for all $\alpha\in[0,1]$,
    \begin{equation}
        c^{\alpha\sigma^*\cup(1-\alpha)\sigma_*}(A)=\left\{\alpha\sum_{s\in\sigma^*}sf^s+(1-\alpha)\sum_{s'\in\sigma_*}s'f^{s'}~|~ f^s\in c^s(A),~ f^{s'}\in c^{s'}(A)\right\}.
    \end{equation}
    Moreover, since $v$ is mixture-linear, 
    \begin{equation}
        \min_{f \in c^{\alpha\sigma^*\cup(1-\alpha)\sigma_*}(A)} \left[ \sum_{\omega \in \Omega} v (f_\omega) \nu_\omega \right] = \alpha \min_{f \in c^{{\sigma^*}} (A)} \left[ \sum_{\omega \in \Omega} v (f_\omega) \nu_\omega \right] + (1-\alpha) \min_{f \in c^{{\sigma_*}} (A)} \left[ \sum_{\omega \in \Omega} v (f_\omega) \nu_\omega \right].
    \end{equation}
    Since $\alpha\sigma^*\cup(1-\alpha)\sigma_*$ is Blackwell more informative than $\sigma$ for all $\alpha\in[0,1]$, there exists $\sigma^A\in\mathcal{E}$ such that $\sigma^A\trianglerighteq\sigma$ and
    \begin{equation}
        v(l_A)=\min_{f \in c^{\sigma^A} (A)} \left[ \sum_{\omega \in \Omega} v (f_\omega) \nu_\omega \right].
    \end{equation}
    Define the function 
    $\phi:\mathcal{A}\times\mathcal{E}\rightarrow\mathcal{E}$ such that for all $A\in\mathcal{A}$ and $\sigma\in\mathcal{E}$, $\phi(A,\sigma)=\sigma^A$. Notice that a menu $A$ is either a set of lotteries or a singleton, without loss of generality, we can set $\phi(A,\sigma)=\sigma^A=\sigma$. Also we can set $\phi(A,\sigma)=\sigma^A=\sigma$ if a menu $A$ satisfies 
    \begin{equation}
        \min_{f \in c^{\phi(A,\sigma)} (A)} \left[ \sum_{\omega \in \Omega} v (f_\omega) \nu_\omega \right] = \min_{f \in c^{\sigma} (A)} \left[ \sum_{\omega \in \Omega} v (f_\omega) \nu_\omega \right].
    \end{equation}
    Then, define 
    \begin{equation}
        \mathcal{L}(\sigma)=\left\{\phi(A,\sigma)~|~ \min_{f \in c^{\phi(A,\sigma)} (A)} \left[ \sum_{\omega \in \Omega} v (f_\omega) \nu_\omega \right] \neq \min_{f \in c^{\sigma} (A)} \left[ \sum_{\omega \in \Omega} v (f_\omega) \nu_\omega \right],A\in\mathcal{A}^* \right\}~\bigcup~\{\sigma\}.
    \end{equation}
    We show that 
    \begin{equation}
        \min_{f \in c^{\phi(A,\sigma)} (A)} \left[ \sum_{\omega \in \Omega} v (f_\omega) \nu_\omega \right] =\inf_{\sigma'\in\mathcal{L}(\sigma)} \min_{f \in c^{\sigma'} (A)} \left[ \sum_{\omega \in \Omega} v (f_\omega) \nu_\omega \right].
    \end{equation}
    By the construction of $\mathcal{L}(\sigma)$, 
    \begin{equation}
        \min_{f \in c^{\phi(A,\sigma)} (A)} \left[ \sum_{\omega \in \Omega} v (f_\omega) \nu_\omega \right] \geq\inf_{\sigma'\in\mathcal{L}(\sigma)} \min_{f \in c^{\sigma'} (A)} \left[ \sum_{\omega \in \Omega} v (f_\omega) \nu_\omega \right]
    \end{equation}
    holds.
    Suppose to the contrary that the strict inequality
    \begin{equation}
        \min_{f \in c^{\phi(A,\sigma)} (A)} \left[ \sum_{\omega \in \Omega} v (f_\omega) \nu_\omega \right] > \inf_{\sigma'\in\mathcal{L}(\sigma)} \min_{f \in c^{\sigma'} (A)} \left[ \sum_{\omega \in \Omega} v (f_\omega) \nu_\omega \right]
    \end{equation}
    holds.
    Then, there exists $\sigma^\ast\in\mathcal{L}(\sigma)$ such that
    \begin{equation}
        \min_{f \in c^{\phi(A,\sigma)} (A)} \left[ \sum_{\omega \in \Omega} v (f_\omega) \nu_\omega \right] >\min_{f \in c^{\sigma^\ast} (A)} \left[ \sum_{\omega \in \Omega} v (f_\omega) \nu_\omega \right].
    \end{equation}
    By the construction of $\mathcal{L}(\sigma)$, there exists a menu $B\in\mathcal{A}^*$ such that $(B,\sigma)\sim(c^{\sigma^\ast}(B),o)$. The above inequality implies $(A,\sigma)\succ(c^{\sigma^\ast}(A),o)$, a contradiction to \textit{menu-independent leakage}.
\end{proof}

\subsection*{Proof of Proposition \ref{proposition_RBP_nec}}

It is clear that any robust Bayesian persuasion representation (not necessarily inclusive $\mathcal{L}$) satisfies \textit{basic rationality}, \textit{SEU with no information}, \textit{lottery invariance}, \textit{sender pessimism}, and \textit{betweenness}.
Thus, we only show that if $(\succsim,c)$ is represented by a robust Bayesian persuasion representation and $\mathcal{L}$ is inclusive, then it satisfies \textit{menu-independent leakage}. 
Take any $A,B\in\mathcal{A}^*$ and $\sigma\in\mathcal{E}$ and suppose that there exists $\sigma^\ast\in\mathcal{E}$ such that $\sigma^\ast\trianglerighteq\sigma$ and $(A,\sigma)\sim(c^{\sigma^\ast}(A),o)$. Since $\mathcal{L}$ is inclusive, $\sigma^*\in\mathcal{L}(\sigma)$.
Thus, 
\begin{equation}
    \min_{f \in c^{\sigma^\ast}(B)} \left[ \sum_{\omega \in \Omega} v (f_\omega) \nu_\omega \right]\geq \inf_{\pi\in \mathcal{L}(\sigma)} \sum_{s \in \pi} \left[ \min_{f^s \in c^s (B)} \sum_{\omega \in \Omega} v (f_\omega) s_\omega \nu_\omega \right]= V(B,\sigma).
\end{equation}
This is equivalent to $(c^{\sigma^\ast}(B),o)\succsim(B,\sigma)$.

\subsection*{Proof of Proposition \ref{proposition_uniqueness}}

It is routine to show that there exist $\alpha>0$ and $\beta\in\mathbb{R}$ such that $v'=\alpha v+\beta$, $\nu'=\nu$. We show that two maximal functions satisfies  $\bar{\mathcal{L}}=\bar{\mathcal{L}}'$. Suppose to the contrary that there exists $\sigma\in\mathcal{E}$ such that $\bar{\mathcal{L}}(\sigma)\neq \bar{\mathcal{L}}'(\sigma)$. Then, without loss of generality, there exists $\sigma^*\in\bar{\mathcal{L}}(\sigma)$ such that $\sigma^*\notin \bar{\mathcal{L}}'(\sigma)$. By the definition of $\bar{\mathcal{L}}$, $V\left(c^{\sigma^*}(A),o\right)\geq V(A,\sigma)$ for all $A\in\mathcal{A}^*$. On the other hand, since $\sigma^*\notin \bar{\mathcal{L}}'(\sigma)$, there exists $B\in\mathcal{A}^*$ such that $V(B,\sigma)>V\left(c^{\sigma^*}(B),o\right)$. This is a contradiction to $V\left(c^{\sigma^*}(B),o\right)\geq V(B,\sigma)$. Thus, $\bar{\mathcal{L}}=\bar{\mathcal{L}}'$.

\subsection*{Proof of Proposition \ref{proposition_comparative}}

We first show that the first condition implies the second condition. Suppose to the contrary that for some $\sigma\in\mathcal{E}$, $\bar{\mathcal{L}}_1(\sigma)\not\subseteq \bar{\mathcal{L}}_2(\sigma)$. Then, there exists $\sigma'\in \bar{\mathcal{L}}_1(\sigma)$ such that $\sigma'\notin \bar{\mathcal{L}}_2(\sigma)$. Then, by the definition of $\bar{\mathcal{L}}_2$, there exists $A\in\mathcal{A}^*$ such that $(A,\sigma) \succ_2 \left(c^{\sigma'}(A),o\right)$ holds. On the other hand, since $\sigma'\in \bar{\mathcal{L}}_1(\sigma)$, for all $B\in\mathcal{A}^*$, $\left(c^{\sigma'}(B),o\right)\succsim_1 (B,\sigma)$. Since sender 1's vNM function and prior belief are coincide with those of sender 2, there exists a lottery $l\in\Delta
(X)$ such that $\left(c^{\sigma'}(A),o\right)\sim_1 (\{l\},o)$ and $\left(c^{\sigma'}(A),o\right)\sim_2 (\{l\},o)$. This implies that $(\{l\},o)\succsim_1 (A,\sigma)$ but $(A,\sigma) \succ_2 (\{l\},o)$. This contradicts the first condition. 

The proof that the second condition implies the first condition is straightforward since for all $A\in\mathcal{A}$, $l\in\Delta(X)$, and $\sigma\in\mathcal{E}$,
\begin{equation}
    V_1(\{l\},\sigma) \geq \inf_{\pi\in \bar{\mathcal{L}}_1(\sigma)} \sum_{s\in\pi,\omega\in\Omega} \underline{V}_1 (\nu ^s) s_\omega \nu_\omega \geq \inf_{\pi\in \bar{\mathcal{L}}_2(\sigma)} \sum_{s\in\pi,\omega\in\Omega} \underline{V}_2 (\nu ^s) s_\omega \nu_\omega.
\end{equation}

\subsection*{Proof of Corollary \ref{corollary_information_avoidance}}

Take any $A\in\mathcal{A}$ and $\sigma\in\mathcal{E}$. Let $l_A\in\Delta(X)$ be a lottery such that $(\{l_A\},o)\sim_i (A,o)$ for $i\in\{1,2\}$. Notice that such a lottery exists since sender 1's vNM function and prior belief are coincide with those of sender 2. Then, applying Proposition \ref{proposition_comparative}, we obtain 
\begin{equation}
    (\{l_A\},o)\succsim_1 (A,\sigma) \implies (\{l_A\},o)\succsim_2 (A,\sigma).
\end{equation}
Thus, 
\begin{equation}
    (A,o)\succsim_1 (A,\sigma) \implies (A,o)\succsim_2 (A,\sigma).
\end{equation}

\subsection*{Proof of Corollary \ref{corollary_BP}}

The proof of Corollary \ref{corollary_BP} follows from a slight modification of Lemma \ref{Lemma_min} in Theorem \ref{theorem_RBP}. By an argument analogous to that in Lemma \ref{Lemma_min}, \textit{sender optimism} implies that for all $A,B\in\mathcal{A}$ and $\sigma\in\mathcal{E}$, 
\begin{equation}
    (c^\sigma(A),o)\succsim(c^\sigma(B),o) \iff \max_{f \in c^{\sigma} (A)} \left[ \sum_{\omega \in \Omega} v (f_\omega) \nu_\omega \right] \geq \max_{g \in c^{\sigma} (B)} \left[ \sum_{\omega \in \Omega} v (g_\omega) \nu_\omega \right]. 
\end{equation}
Moreover, by \textit{reduction}, this representation directly extends to the sender's preference over pairs consisting of a menu and the information structure $\sigma$.
By algebra,
\begin{equation}
    \max_{f \in c^{\sigma} (A)} \left[ \sum_{\omega \in \Omega} v (f_\omega) \nu_\omega \right]=\sum_{s\in\sigma,\omega\in\Omega} \left[ \max_{f\in c^s(A)} \sum_{\omega\in\Omega} v(f_\omega) \nu^s_\omega \right]s_\omega \nu_\omega
\end{equation}
holds. Finally, by \textit{basic rationality}, for any $A, B$ and $\sigma, \sigma'$,
\begin{align}
    &~~~~~~~~~(A,\sigma)\succsim(B,\sigma') \\
    &\iff \sum_{s\in\sigma,\omega\in\Omega} \left[ \max_{f\in c^s(A)} \sum_{\omega\in\Omega} v(f_\omega) \nu^s_\omega \right]s_\omega \nu_\omega \geq \sum_{s'\in\sigma',\omega\in\Omega} \left[ \max_{f\in c^{s'}(A)} \sum_{\omega\in\Omega} v(f_\omega) \nu^{s'}_\omega \right]{s'}_\omega \nu_\omega
\end{align}
holds.

\subsection*{Proof of Corollary \ref{corollary_garbling}}

Up to Lemma \ref{Lemma_min}, the same argument with the proof of Theorem \ref{theorem_RBP} holds. By \textit{betweenness$^\dagger$}, for all $\sigma\in\mathcal{E}$ and $A\in\mathcal{A}$, there exists a lottery $l_A$ such that $(A,\sigma)\sim (\{l_A\},o)$. Moreover, by the same argument with Lemma \ref{lemma_final}, there exist $\sigma^*,\sigma_*\in\mathcal{E}$ such that $\sigma\trianglerighteq\sigma^*,\sigma_*$ and 
\begin{equation}
    v(l_A) = \min_{f \in c^{\alpha\sigma^*\cup(1-\alpha)\sigma_*}(A)} \left[ \sum_{\omega \in \Omega} v (f_\omega) \nu_\omega \right]
\end{equation}
for some $\alpha\in[0,1]$.
Let $\phi(A,\sigma)=\alpha\sigma^*\cup(1-\alpha)\sigma_*$ and define 
\begin{equation}
    \mathcal{L^\dagger}(\sigma)=\left\{\phi(A,\sigma)~|~ \min_{f \in c^{\phi(A,\sigma)} (A)} \left[ \sum_{\omega \in \Omega} v (f_\omega) \nu_\omega \right] \neq \min_{f \in c^{\sigma} (A)} \left[ \sum_{\omega \in \Omega} v (f_\omega) \nu_\omega \right],A\in\mathcal{A}^* \right\}~\bigcup~\{\sigma\}.
\end{equation}
By the same argument with Lemma \ref{lemma_final}, \textit{menu-independent leakage$^\dagger$} implies that 
\begin{equation}
    v(l_A)=\inf_{\pi\in \mathcal{L^\dagger}(\sigma)} \sum_{s \in \pi} \left[ \min_{f^s \in c^s (A)} \sum_{\omega \in \Omega} v (f_\omega) s_\omega \nu_\omega \right].
\end{equation}
Thus, the proof ends.

\bibliographystyle{econ}
\bibliography{Literature}

\end{document}